\def\S{\mathbb{S}}
\def\K{S}
\def\bmx{\bm{x}}
\def\x{\bmx}
\def\bmy{\bm{y}}
\def\y{\bmy}
\def\f{\mathbf{f}}
\def\q{\mathbf{q}}
\DeclareMathOperator{\polytope}{\mathcal{C}}
\newcommand{\R}{\mathbb{R}}
\newcommand{\Z}{\mathbb{Z}}
\newcommand{\Q}{\mathbb{Q}}
\newcommand{\N}{\mathbb{N}}
\DeclareMathOperator{\bigo}{\mathcal{O}}
\DeclareMathOperator{\sgn}{sgn}
\DeclareMathOperator{\trace}{Tr}
\newcommand{\sdp}{\texttt{sdp}}
\newcommand{\sdpcon}{\texttt{sdp}}
\newcommand{\sdpfun}[3]{\texttt{sdp}(#1,#2,#3)}
\newcommand{\sdpconfun}[4]{\texttt{sdp}(#1,#2,#3,#4)}
\newcommand{\cholesky}{\texttt{cholesky}}
\newcommand{\choleskyfun}[3]{\texttt{\cholesky}(#1,#2,#3)}
\newcommand{\nsdp}{n_\text{sdp}}
\newcommand{\msdp}{m_\text{sdp}}
\newcommand{\clist}{\texttt{c\_list}}
\newcommand{\calpha}{\texttt{c\_alpha}}
\newcommand{\slist}{\texttt{s\_list}}
\newcommand{\univsostwo}{\texttt{univsos2}}
\newcommand{\intsos}{\texttt{intsos}}
\newcommand{\multivsos}{\texttt{multivsos}}
\newcommand{\intsosfun}[5]{\texttt{intsos}(#1,#2,#3,#4,#5)}
\newcommand{\absorb}{\texttt{absorb}}
\newcommand{\absorbfun}[5]{\texttt{absorb}(#1,#2,#3,#4,#5)}
\newcommand{\putinarsosfun}[6]{\texttt{Putinarsos}(#1,#2,#3,#4,#5,#6)}
\newcommand{\polyasos}{\texttt{Polyasos}}
\newcommand{\cad}{\texttt{CAD}}
\newcommand{\PP}{\texttt{RoundProject}}
\newcommand{\raglib}{\texttt{RAGLib}}
\newcommand{\putinarsos}{\texttt{Putinarsos}}
\theoremstyle{plain}
\newtheorem{theorem}{Theorem}[section]
\newtheorem{lemma}[theorem]{Lemma}
\newtheorem{proposition}[theorem]{Proposition}
\theoremstyle{definition}
\newtheorem{assumption}[theorem]{Assumption}
\newtheorem{example}{Example}
\newtheorem{remark}{Remark}
\newcommand{\qed}{\nobreak \ifvmode \relax \else
      \ifdim\lastskip<1.5em \hskip-\lastskip
      \hskip1.5em plus0em minus0.5em \fi \nobreak
      \vrule height0.75em width0.5em depth0.25em\fi}
\newcommand{\spt}[1]{\mbox{supp}(#1)}
\definecolor{dkviolet}{rgb}{0.6,0,0.8}
\begin{document}
\author{Victor Magron$^{1,2}$ \and Mohab Safey El Din$^{2}$}
\date{\today}
\title{On Exact Polya and Putinar's Representations}

\maketitle

\footnotetext[1]{CNRS Verimag, 700 av Centrale, 38401 Saint-Martin d'Hères, France }. 
\footnotetext[2]{Sorbonne Universit\'e, CNRS, INRIA, Laboratoire d'Informatique de Paris 6, PolSys, Paris, France}



\def\xxx{{\bf (xxx)}}
\def\todo{{\bf (todo!)}}
\def\check{{\bf (check!)}}


%

\begin{abstract}
We consider the problem of finding exact sums of squares (SOS)
decompositions for certain classes of non-negative multivariate
polynomials, relying on semidefinite programming (SDP) solvers.

We start by providing a hybrid numeric-symbolic algorithm computing
exact rational SOS decompositions for polynomials lying in the
interior of the SOS cone. It computes an approximate SOS decomposition
for a perturbation of the input polynomial with an arbitrary-precision
SDP solver. An exact SOS decomposition is obtained thanks to the
perturbation terms.  We prove that bit complexity estimates on output
size and runtime are both polynomial in the degree of the input
polynomial and simply exponential in the number of variables.  Next,
we apply this algorithm to compute exact Polya and Putinar's
representations respectively for positive definite forms and positive
polynomials over basic compact semi-algebraic sets.  We also compare
the implementation of our algorithms with existing methods in computer
algebra including cylindrical algebraic decomposition and critical point
method.


\end{abstract}
\paragraph{Keywords:} 
Semidefinite programming, sums of squares decomposition, Polya's representation, Putinar's representation, hybrid numeric-symbolic algorithm, real algebraic geometry.
\thanks{}
\section{Introduction}
\label{sec:intro}

Let $\Q$ (resp.~$\R$) be the field of rational (resp.~real) numbers
and $X = (X_1, \ldots, X_n)$ be a sequence of variables. We consider
the problem of deciding the non-negativity of $f \in \Q[X]$ either
over $\R^n$ or over a semi-algebraic set $S$ defined by some
constraints $g_1\geq 0, \ldots, g_m\geq 0$ (with $g_j \in \Q[X]$).
Further, $d$ denotes the maximum of the total degrees of these
polynomials.

This problem is known to be NP hard~\cite{blum}. The Cylindrical
Algebraic Decomposition algorithm~\cite{Collins75} allows to solve it
in time doubly exponential in $n$ (and polynomial in $d$). This
complexity result has been improved later on, through the so-called
critical point method, starting from~\cite{GV88} which culminates
with~\cite{BPR98} to establish that this decision problem can be
solved in time $((m+1)d)^{O(n)}$. These latter ones have been
developed to obtain implementations which reflect the complexity gain
(see
e.g.~\cite{BGHM1,BGHM3,SaSc03,S07,BGHSS, Safey10Issac,BGHS14,Greuet14,Greuet12Sos}) but still
within a singly exponential complexity in $n$. Besides, these
algorithms are ``root finding'' ones: they try to find a
point at which $f$ is negative over the considered domain. When $f$ is
positive, they return an empty list without a {\it certificate} that
can be checked {\it a posteriori}.

To compute certificates of non-negativity, an approach based on {\it
  sums of squares} (SOS) decompositions (and their variants) has been
popularized by Lasserre~\cite{Las01sos} and Parillo~\cite{phdParrilo}
(see also the survey~\cite{laurent2009sums} and references
therein). In a nutshell, the idea is as follows.

A polynomial $f$ is non-negative over $\R^n$ if it can be written as
an SOS $s_1^2+\cdots +s_r^2$ with $s_i \in \R[X]$ for
$1\leq i \leq r$. Also $f$ is non-negative over the semi-algebraic set
$S$ if it can be written as
$s_1^2+\cdots+s_r^2+\sum_{j=1}^m \sigma_j g_j$ where $\sigma_i$ is a
sum of squares in $\R[X]$ for $1\leq j \leq m$. It turns out that,
thanks to the ``Gram matrix method'' (see
e.g.~\cite{Las01sos,phdParrilo}), computing such decompositions can be
reduced to solving Linear Matrix Inequalities (LMI). This boils down
to considering a semidefinite programming (SDP) problem.

For instance, on input $f \in \Q[X]$ of even degree $d = 2k$, the
decomposition $f=s_1^2+\cdots+s_r^2$ is a by-product of a
decomposition of the form $f = v_k^T L^T D L v_k$ where $v_k$ is the
vector of all monomials of degree $\leq k$ in $\Q[X]$, $L$ is a lower
triangular matrix with non-negative real entries on the diagonal and
$D$ is a diagonal matrix with non-negative real entries. The matrices
$L$ and $D$ are obtained after computing a symmetric matrix $G$ (the
Gram matrix), semidefinite positive, such that $f = v_k^T G v_k$. Such
a matrix $G$ is found using solvers for LMIs. Such inequalities can be
solved symbolically (see~\cite{Simone16}), but the degrees of the
algebraic extensions needed to encode exactly the solutions are
prohibitive on large examples \cite{NRS10}.
Besides, there exist fast numerical solvers for solving LMIs
implemented in double precision,
e.g.~SeDuMi~\cite{Sturm98usingsedumi},~SDPA~\cite{Yamashita10SDPA} as
well as arbitrary-precision solvers, e.g.~SDPA-GMP~\cite{Nakata10GMP},
successfully applied in many contexts, including bounds for kissing
numbers~\cite{Bachoc06newupper} or computation of (real) radical
ideals~\cite{Las13ideal}.

But using uniquely numerical solvers yields ``approximate''
non-negativity certificates. On our example, the matrices $L$ and $D$
(and consequently the polynomials $s_1, \ldots, s_r$) are not known
exactly.

This raises topical questions. The first one is how to let interact
symbolic computation with these numerical solvers to get {\it exact}
certificates? Since not all positive polynomials are SOS, what to do when SOS certificates do not
exist? Also, given inputs with
rational coefficients, can we obtain certificates with rational
coefficients?

%

For these questions, we inherit from previous contributions in the
univariate case~\cite{Chevillard11,univsos} as well as in the
multivariate case~\cite{PaPe08,KLYZ08}. Diophantine aspects are
considered in \cite{Safey10Siam, GSZ13}. 
In the univariate (un)-constrained case, the algorithm
from~\cite{Chevillard11} computes an exact weighted SOS decomposition
for a given positive polynomial $f \in \Q[X]$. The algorithm considers
a perturbation of $f$, performs (complex) root isolation to get an
approximate SOS decomposition of $f$. When the isolation is precise
enough, the algorithm relies the perturbation terms to recover an
exact rational decomposition.
In the multivariate unconstrained case, Parillo and Peyrl designed a rounding-projection algorithm in~\cite{PaPe08} to compute a weighted rational SOS decompositon of a given polynomial $f$ in the interior of the SOS cone. The algorithm computes an approximate Gram matrix of $f$, and rounds it to a rational matrix.  With sufficient precision digits, the algorithm performs an orthogonal projection to recover an exact Gram matrix of $f$. The SOS decomposition is then obtained with an exact $L D L^T$ procedure. 
{This approach was significantly extended in~\cite{KLYZ08} to handle rational functions.}


\textit{Main contributions.} This work provides an algorithmic
framework to handle (un)-constrained polynomial problems
with exact rational weighted SOS decompositions. The first contribution, given
in Section~\ref{sec:intsos}, is a hybrid numeric-symbolic algorithm,
called $\intsos$, providing rational SOS decompositions for
polynomials lying in the interior of the SOS cone. 
As for the algorithm from~\cite{Chevillard11}, the main idea is to
perturbate the input polynomial, then to obtain an approximate Gram matrix
of the perturbation by solving an SDP problem, and to recover an exact
decomposition with the perturbation terms.

In Section~\ref{sec:polya}, we rely on $\intsos$ to compute decompositions of positive definite forms into SOS of rational functions, based on
Polya's representations, yielding a second algorithm, called
$\polyasos$.
In Section~\ref{sec:putinar}, we rely on $\intsos$ to
compute weighted SOS decompositions for polynomials positive over
compact semi-algebraic sets, yielding a third algorithm, called
$\putinarsos$.

When the input is an $n$-variate polynomial of degree $d$ with integer
coefficients of maximum bit size $\tau$, we prove in
Section~\ref{sec:intsos} that Algorithm~$\intsos$ runs in boolean time
$\tau^2 d^{\bigo{(n)}}$ and outputs SOS polynomials of bit size
bounded by $\tau d^{\bigo{(n)}}$.  This also yields bit complexity
analysis for Algorithm~$\polyasos$ (see Section~\ref{sec:polya}) and
Algorithm~$\putinarsos$ (see Section~\ref{sec:putinar}). To the best
of our knowledge, these are the first complexity estimates for the
output of algorithms providing exact multivariate SOS decompositions.

The three algorithms are implemented within a Maple library, called
$\multivsos$. In Section~\ref{sec:benchs}, we provide numerical
benchmarks to evaluate the performance of $\multivsos$ against
existing methods based on CAD or critical point methods.

\paragraph*{Acknowledgments.} M.~Safey El Din is supported by the
ANR-17-CE40-0009 GALOP project and the GAMMA project funded by
PGMO/FMJH. V.~Magron is supported by the LabEx PERSYVAL-Lab (ANR-11-LABX-0025-01) funded by the French program ``Investissement d'avenir'' and by the European Research Council (ERC) ``STATOR'' Grant Agreement nr. 306595.


%


\vspace{-0.4cm}
\section{Preliminaries}
\label{sec:prelim}

Let $\Z$ be the set of integers. 
For
$\alpha = (\alpha_1,\dots,\alpha_n) \in \N^n$, one has
$|\alpha| := \alpha_1 + \dots + \alpha_n$ and
$X^\alpha := X_1^{\alpha_1} \dots X_n^{\alpha_n}$.
For all $k \in \N$, we let
$\N^{n}_k := \{ \alpha \in \N^{n} : |\alpha| \leq k \}$, whose
cardinality is the binomial $\binom{n+k}{k}$.  A polynomial
$f \in \R[X]$ of degree $d = 2k$ is written as
$ f \,=\,\sum_{|\alpha|\leq d} \, f_{\alpha} \, X^\alpha $ and we
identify $f$ with its vector of coefficients $\f=(f_{\alpha})$ in the
basis $(X^\alpha)$, $\alpha \in\N_d^n$. 
Let ${\Sigma}[X]$ be the convex cone of sums of squares in $\R[X]$ and
$\mathring{\Sigma}[X]$ be the interior of $\Sigma[X]$. We note $\Sigma_\Z(X) := \Z[X] \cap \Sigma[X]$ and $\mathring{\Sigma}_\Z[X]$ its interior. For instance, the polynomial $f = 4 X_1^4 + 4 X_1^3 X_2 - 7 X_1^2 X_2^2 - 2 X_1 X_2^3 + 10 X_2^4 = (2 X_1 X_2 + X_2^2)^2 + (2 X_1^2 + X_1 X_2 - 3 X_2^2)^2$ belongs to $\Sigma_\Z(X)$.

We rely on the bit complexity model for complexity estimates. The bit size of an integer $b$ is denoted by $\tau(b) := \log_2 (|b|) + 1$ with $\tau(0) := 1$. For $f = \sum_{|\alpha| \leq d} f_\alpha X^\alpha \in \Z[X]$ of degree $d$, we note $\|f\|_\infty := \max_{|\alpha| \leq d} |f_\alpha|$ and $\tau(f) := \tau(\|f\|_\infty)$ with slight abuse of notation. Given $b \in \Z$ and $c \in \Z \backslash \{0\}$ with gcd$(b,c) = 1$, we define $\tau(b/c) := \max \{\tau(b), \tau(c)\}$. For two mappings $g,h : \N^l \to \R$, we use the notation ``$g(v) = \bigo{(h(v))}$'' to state the existence of $b \in \N$ such that $g(v) \leq b h(v)$, for all $v \in \N^l$.

The {\em Newton polytope}
or {\em cage} $\polytope{(f)}$ is the convex hull of the vectors of exponents
of monomials that occur in $f \in \R[X]$.  For the above example, $\polytope{(f)} = \{(4,0),(3,1),(2,2),(1,3),(0,4)\}$. For a symmetric real matrix $G$, we note $G \succeq 0$ (resp.~$G \succ 0$) when $G$ has only non-negative (resp.~positive) eigenvalues and we say that $G$ is {\em positive semidefinite} (SDP) (resp.~{\em positive definite}).

\if{
Given a real sequence
$\y =(y_{\alpha})_{\alpha \in \N^n}$, the linear functional
$\ell_\y : \R[X] \to \R$ is defined by
$\ell_\y(f) := \sum_{|\alpha| \leq d} f_{\alpha} y_{\alpha}$, for all
$f \in \R[X]$. 
We associate to $\y$ the {\it moment matrix} $M_k(\y)$, that is the real
symmetric matrix with rows (resp.~columns) indexed by $\N_k^n$ and the
following entrywise definition:
$(M_k(\y))_{\beta,\gamma} := \ell_\y(X^{\beta + \gamma})$ forall
$\beta, \gamma \in \N_k^n$.

Given $g \in \R[X]$, we associate to $\y$ the {\it localizing matrix},
that is the real symmetric matrix $M_k(g \, \y)$ with rows (resp.~
columns) indexed by $\N_k^{n}$ and the following entrywise definition:
$(M_k(g \, \y))_{\beta, \gamma} := \ell_\y(g \, X^{\beta + \gamma}) $
forall $\beta, \gamma \in \N_k^n $.
\begin{equation}
\label{eq:primalsdp}
\begin{aligned}
\inf\limits_{\y} \quad & \sum_{|\alpha| \leq d} f_\alpha y_{\alpha}   \quad 
\text{s.t.} \quad & M_k(\y) \succeq 0 \,, \quad y_0 = 1  \,. \\
\end{aligned}
\end{equation}
Writing $M_k(\y) = \sum_{|\alpha| \leq d} y_\alpha B_\alpha$, the dual of SDP~\eqref{eq:primalsdp} is also an SDP:
}\fi
With $f \in \R[X]$ of degree $d = 2k$, we consider the SDP program:
\begin{equation}
\label{eq:dualsdp}
\begin{aligned}
\inf\limits_{G \succeq 0} \quad & \trace{(G \, B_0)}  \quad 
\text{s.t.} \quad & \trace{(G \, B_\gamma)} = f_\gamma \,, \quad \forall \gamma \in \N_d^n \,, \\
\end{aligned}
\end{equation}
where $B_\gamma$  has rows (resp.~
columns) indexed by $\N_k^{n}$ with $(\alpha, \beta)$ entry equal to 1 if $\alpha + \beta = \gamma$ and 0 otherwise.
\begin{theorem}\cite[Theorem~3.2]{Las01sos}
\label{th:lasunconstrained}
Let $f \in \R[X]$ of degree $d = 2k$ and global minimum
$f^\star := \inf_{\bmx \in \R^n} f(\bmx)$.  
Assume that SDP~\eqref{eq:dualsdp} has a feasible solution
$G^\star = \sum_{i=1}^r \lambda_i \q_i \, \q_i^T $, with the $\q_i$
being the eigenvectors of $G^\star$ corresponding to the non-negative
eigenvalues $\lambda_i$, for all $i=1,\dots,r$.  Then
$f - f^\star = \sum_{i=1}^r \lambda_i q_i^2$.
\end{theorem}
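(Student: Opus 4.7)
The argument rests on two ingredients: a ``Gram matrix identity'' that converts the SDP constraints into a polynomial identity, and the spectral decomposition of the positive semidefinite matrix $G^\star$.

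First I would unfold the identity $v_k v_k^T = \sum_{\gamma \in \N_d^n} B_\gamma \, X^\gamma$, where $v_k$ is the vector of all monomials of degree $\leq k$. Indeed the $(\alpha,\beta)$ entry on the left is $X^{\alpha+\beta}$, while on the right $B_\gamma$ was defined precisely to have a $1$ in position $(\alpha,\beta)$ exactly when $\alpha+\beta=\gamma$. Taking the Frobenius inner product against $G^\star$ therefore gives
\begin{equation*}
v_k^T G^\star v_k \;=\; \trace\bigl(G^\star \, v_k v_k^T\bigr) \;=\; \sum_{\gamma \in \N_d^n} \trace(G^\star B_\gamma)\, X^\gamma .
\end{equation*}
The feasibility constraints $\trace(G^\star B_\gamma)=f_\gamma$ of SDP~\eqref{eq:dualsdp}, interpreted so that the objective $\trace(G^\star B_0)$ accounts for the constant term and attains the value $f_0 - f^\star$, then yield the polynomial identity $v_k^T G^\star v_k = f - f^\star$.

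Second, I would invoke the spectral theorem on $G^\star \succeq 0$ to obtain an orthonormal family of eigenvectors $\q_1,\ldots,\q_r$ with non-negative eigenvalues $\lambda_1,\ldots,\lambda_r \geq 0$ and $G^\star = \sum_{i=1}^r \lambda_i \, \q_i \q_i^T$. Introducing the polynomial $q_i(X) := \q_i^T v_k \in \R[X]$, substitution gives
\begin{equation*}
f - f^\star \;=\; v_k^T G^\star v_k \;=\; \sum_{i=1}^r \lambda_i \,(\q_i^T v_k)^2 \;=\; \sum_{i=1}^r \lambda_i \, q_i^2 ,
\end{equation*}
which is the claimed weighted SOS decomposition with non-negative weights $\lambda_i$.

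The one delicate step, and essentially the only place where the statement is not pure linear algebra, is the constant-term bookkeeping in the first paragraph: one has to reconcile the SDP objective $\trace(G^\star B_0)$ with the global minimum $f^\star$. This is handled through the standard dual reading of~\eqref{eq:dualsdp} as maximising $a$ subject to $f - a = v_k^T G v_k$ for some $G \succeq 0$, for which the maximum is a lower bound on $f^\star$; the hypothesis that a feasible $G^\star$ of the stated form exists is exactly the assumption that this bound is tight, so that the identity $v_k^T G^\star v_k = f - f^\star$ holds and the two ingredients combine.
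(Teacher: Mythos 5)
The paper offers no proof of this statement at all: it is quoted as a citation of Lasserre's Theorem~3.2, so there is no in-paper argument to compare against. Your proof is the standard (and essentially the only) one: the identity $v_k v_k^T = \sum_{\gamma \in \N_d^n} B_\gamma X^\gamma$ converts the linear constraints into the polynomial identity $v_k^T G^\star v_k = \sum_\gamma \trace(G^\star B_\gamma)\, X^\gamma$, and the spectral decomposition of $G^\star \succeq 0$ with $q_i := \q_i^T v_k$ then yields the weighted SOS. Both of these steps are correct.

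The one point to tighten is exactly the one you flag: the constant term. As literally written, the constraints of SDP~\eqref{eq:dualsdp} range over \emph{all} $\gamma \in \N_d^n$, including $\gamma = 0$, so a feasible $G^\star$ satisfies $\trace(G^\star B_0) = f_0$ and your computation produces $f = \sum_i \lambda_i q_i^2$, not $f - f^\star$. The intended reading (Lasserre's) is that the constraint at $\gamma = 0$ is dropped, the objective $\trace(G B_0)$ is minimized, and $G^\star$ is an \emph{optimal} solution under the additional hypothesis that $f - f^\star$ is itself a sum of squares, so that the optimal value is $f_0 - f^\star$. Your closing claim that the existence of a feasible $G^\star$ ``is exactly the assumption that this bound is tight'' overstates matters: feasibility alone never forces the SOS lower bound to coincide with $f^\star$, and not every non-negative $f$ admits such a $G^\star$ at all. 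This imprecision is inherited from the theorem statement rather than introduced by you, but the correct repair is to assume $G^\star$ optimal with $f - f^\star \in \Sigma[X]$, not merely feasible; with that reading your two main steps give a complete proof.
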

For the sake of efficiency, one reduces the size of matrix $G$
indexing its rows and columns by half of $\polytope{(f)}$:
\vspace*{-0.1cm}
\begin{theorem}\cite[Theorem~1]{Reznick78}
\label{th:np}
Let $f \in \Sigma[X]$ with $f = \sum_{i=1}^r s_i^2$ and $P := \polytope{(f)}$.  Then for all $i=1,\dots,r$, $\polytope{(s_i)} \subseteq P/2$.
\end{theorem}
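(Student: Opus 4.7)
The strategy is to work with the combined Newton polytope $\Lambda := \mathrm{conv}\bigl(\bigcup_{i=1}^r \polytope{(s_i)}\bigr)$ and show that every vertex of $\Lambda$ lies in $P/2$. Since $P/2$ is convex, this will imply $\Lambda \subseteq P/2$, and hence $\polytope{(s_i)} \subseteq \Lambda \subseteq P/2$ for each $i$, which is exactly what we need.

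First, I would fix a vertex $v$ of $\Lambda$. By extremality, whenever $v \in \mathrm{supp}(s_i)$ for some $i$, it is in fact a vertex of $\polytope{(s_i)}$; and by the very definition of $\Lambda$, at least one index $i$ satisfies $(s_i)_v \neq 0$. The goal is then to exhibit a nonzero coefficient of $X^{2v}$ in $f$, which forces $2v \in P$.

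Next, I would examine the coefficient of $X^{2v}$ in each square $s_i^2$: it equals $\sum_{\beta + \gamma = 2v} (s_i)_\beta (s_i)_\gamma$ where $\beta, \gamma \in \mathrm{supp}(s_i) \subseteq \Lambda$. Writing $v = (\beta + \gamma)/2$ as a convex combination of points of $\Lambda$ and using that $v$ is a vertex of $\Lambda$, we must have $\beta = \gamma = v$. Consequently the coefficient of $X^{2v}$ in $s_i^2$ equals $(s_i)_v^2 \geq 0$ for every $i$, and is strictly positive for at least one index. Summing over $i$ gives a coefficient of $X^{2v}$ in $f = \sum_i s_i^2$ equal to $\sum_i (s_i)_v^2 > 0$, so $2v \in \mathrm{supp}(f) \subseteq P$, i.e.\ $v \in P/2$.

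The main obstacle I would expect to justify carefully is the absence of cancellation: in general $s_i^2$ has indefinite signs at interior exponents, and different squares could in principle cancel when summed into $f$. The vertex-extremality argument above is precisely what rules out cancellation at the extreme points $2v$ of $2\Lambda$, forcing every contribution to be a nonnegative square; once this is established, convexity of $P/2$ closes the argument.
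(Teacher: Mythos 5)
Your proof is correct. The paper states this result only as a citation of Reznick's Theorem~1 and gives no proof, but your argument --- passing to a vertex $v$ of $\Lambda = \mathrm{conv}\bigl(\bigcup_i \polytope{(s_i)}\bigr)$, using extremality to force $\beta=\gamma=v$ in every product contributing to the coefficient of $X^{2v}$, and concluding that $f_{2v}=\sum_i (s_i)_v^2>0$ so that $2v\in P$ --- is precisely the classical argument behind the cited theorem, and all the steps (in particular that a vertex of $\Lambda$ lies in some $\mathrm{supp}(s_i)$, which rules out total cancellation) are correctly justified.
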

\vspace*{-0.1cm}

Given $f \in \R[X]$, Theorem~\ref{th:lasunconstrained} states that one can theoretically certify that $f$ lies in
$\Sigma[X]$ by solving SDP~\eqref{eq:dualsdp}.  However,
available SDP solvers are typically implemented in finite-precision
and require the existence of a strictly feasible solution $G \succ 0$
to converge. This is equivalent for $f$ to lie in
$\mathring{\Sigma}[X]$ as stated in~\cite[Proposition~5.5]{Choi95}:
\vspace*{-0.1cm}
\begin{theorem}
\label{th:intsospdGram}
Let $f \in \Z[X]$ with $P := \polytope{(f)}$ and $v_k$ be the
vector of all monomials in $P/2$. Then $f \in \mathring{\Sigma}[X]$
if and only if there exists a positive definite matrix $G$ such that
$f = v_k^T G v_k$.
\end{theorem}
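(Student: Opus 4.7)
The plan is to prove both implications separately. For the forward direction ($\Rightarrow$), I would introduce the auxiliary SOS polynomial $g := v_k^T I v_k = \sum_{\alpha \in (P/2) \cap \N^n} X^{2\alpha}$, whose support lies in $P$. Since $f$ is assumed to lie in the interior of the SOS cone, there exists $\epsilon > 0$ such that $f - \epsilon g \in \Sigma[X]$ and $\polytope{(f - \epsilon g)} \subseteq P$. Applying Theorem~\ref{th:np} to $f - \epsilon g$ yields an SOS decomposition $f - \epsilon g = \sum_i s_i^2$ with $\polytope{(s_i)} \subseteq P/2$. Expanding each $s_i = u_i^T v_k$ in the basis $v_k$ and setting $G_0 := \sum_i u_i u_i^T \succeq 0$, we obtain $f - \epsilon g = v_k^T G_0 v_k$, and hence $f = v_k^T G v_k$ with $G := G_0 + \epsilon I \succ 0$.

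For the backward direction ($\Leftarrow$), suppose $f = v_k^T G v_k$ with $G \succ 0$. I would rely on an openness argument. Consider the linear map $\Phi : \mathrm{Sym}_N \to \R[X]$, $M \mapsto v_k^T M v_k$, whose image is a finite-dimensional subspace $V$ of polynomials supported in $P$. As a surjective linear map between finite-dimensional spaces, $\Phi$ is open. The cone of positive definite matrices is open in $\mathrm{Sym}_N$, so $\Phi(\text{PD cone})$ is an open subset of $V$. Since $\Phi(\text{PD cone}) \subseteq \Phi(\text{PSD cone}) \subseteq \Sigma[X]$, the polynomial $f$ admits an open neighborhood of SOS polynomials, thereby lying in the interior of $\Sigma[X]$.

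The main obstacle will be to make rigorous the notion of interior $\mathring{\Sigma}[X]$: it should be interpreted within the subspace $V$ of polynomials supported in $P$ (equivalently, within the image of $\Phi$), rather than globally in $\R[X]$. Without this restriction, arbitrarily small perturbations introducing monomials with support outside $P$ can break the SOS property even for very well-behaved $f$, so the global interior would make the theorem false. Once the correct ambient space is fixed, both directions reduce to Reznick's support result (Theorem~\ref{th:np}) combined with an elementary openness argument for linear maps between finite-dimensional spaces.
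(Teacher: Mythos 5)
The paper itself gives no proof of this statement: it is imported verbatim from \cite[Proposition~5.5]{Choi95}, so there is no in-paper argument to measure yours against. Your proof is essentially the standard one and both directions are sound. In the forward direction you correctly combine interiority (to subtract $\epsilon\, v_k^T I v_k$, whose support lies in $P$) with Theorem~\ref{th:np} applied to $f-\epsilon g$, which confines the supports of the squares to $P/2$ and hence produces a Gram matrix indexed by $v_k$; adding back $\epsilon I$ gives positive definiteness. In the backward direction, the openness of the image of the positive definite cone under $M\mapsto v_k^T M v_k$ is the right mechanism.

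One refinement is needed in your final paragraph: the parenthetical ``equivalently, within the image of $\Phi$'' is not an equivalence. The image of $\Phi$ is spanned by the monomials $X^{\alpha+\beta}$ with $\alpha,\beta$ lattice points of $P/2$, and a lattice point of $P$ need not be such a sum. For instance, take $P/2=\mathrm{conv}\{(0,0,0),(1,1,0),(1,0,1),(0,1,1)\}$, whose only lattice points are its four vertices, all with even coordinate sum; then $(1,1,1)\in P$ is not a sum of two of them. For $f=1+X_1^2X_2^2+X_1^2X_3^2+X_2^2X_3^2$ one has $G=I\succ 0$, yet $f+\epsilon X_1X_2X_3$ is not SOS for any $\epsilon\neq 0$ (again by Theorem~\ref{th:np}), so $f$ is \emph{not} interior relative to the space of all polynomials supported in $P$. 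The correct ambient space for $\mathring{\Sigma}[X]$ in this theorem is therefore the image of $\Phi$ (equivalently, the span of the products of pairs of monomials of $v_k$), not the a priori larger space of polynomials with Newton polytope in $P$; with that reading both of your directions fit together and match the cited result. Your forward direction is unaffected, since $g=v_k^T I v_k$ lies in the image of $\Phi$.
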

\vspace{-0.3cm}

%
\section{Exact SOS representations}
\label{sec:intsos}
The aim of this section is to state and analyze a hybrid numeric-symbolic algorithm, called $\intsos$, computing weighted SOS decompositions of polynomials in $\mathring{\Sigma}_\Z[X]$. This algorithm relies on perturbations of such polynomials. 
\begin{proposition}
\label{th:boundeps}
Let $f \in \mathring{\Sigma}_\Z[X]$ of degree $d = 2k$, with
$\tau = \tau(f)$ and $P=\polytope(f)$. Then, there exists
$N\in \N-\{0\}$ such that for $\varepsilon := \frac{1}{2^N}$, 
$f - \varepsilon \sum_{\alpha \in P/2} X^{2 \alpha}
\in \mathring{\Sigma}[X]$.
Moreover, $N = \tau(\varepsilon) \leq \tau d^{\bigo{(n)}}$.
\end{proposition}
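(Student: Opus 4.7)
The plan is to apply Theorem~\ref{th:intsospdGram} twice, converting the perturbed SOS question into an eigenvalue bound on a Gram matrix of $f$. Observe that
\begin{equation*}
\sum_{\alpha \in P/2} X^{2\alpha} \,=\, v_k^T I v_k,
\end{equation*}
so for any symmetric matrix $G$ with $f = v_k^T G v_k$ one has
\begin{equation*}
f - \varepsilon \sum_{\alpha \in P/2} X^{2\alpha} \,=\, v_k^T (G - \varepsilon I) v_k.
\end{equation*}
Since $f \in \mathring{\Sigma}_\Z[X] \subset \mathring{\Sigma}[X]$, Theorem~\ref{th:intsospdGram} furnishes a positive definite such $G$; for any $0 < \varepsilon < \lambda_{\min}(G)$ the matrix $G - \varepsilon I$ is still positive definite, and the reverse direction of Theorem~\ref{th:intsospdGram} then gives $f - \varepsilon \sum X^{2\alpha} \in \mathring{\Sigma}[X]$. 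Existence of an admissible $N$ with $\varepsilon = 2^{-N}$ follows at once.

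The substantive part is the quantitative bound. I would recast existence of an admissible $\varepsilon$ as the SDP
\begin{equation*}
\lambda^\star \,=\, \sup \bigl\{ \lambda \,:\, G \succeq 0,\ G - \lambda I \succeq 0,\ f = v_k^T G v_k \bigr\},
\end{equation*}
where $G$ ranges over symmetric $s \times s$ matrices with $s = |P/2 \cap \N^n| \leq \binom{n+k}{k} = d^{\bigo{(n)}}$. The previous step yields $\lambda^\star > 0$. The feasible set is a non-empty closed semi-algebraic set cut out by linear equations in the entries of $G$ (integer coefficients of bit size at most $\tau$) together with the non-negativity of the principal minors of $G - \lambda I$ (polynomials of degree at most $s$ with constant-bit-size coefficients). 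Applying effective lower bounds from real algebraic geometry on the minimum positive value attained by a polynomial on such a set—tailored to the SDP structure, in the spirit of Porkolab--Khachiyan style estimates—yields $\lambda^\star \geq 2^{-\tau s^{\bigo{(1)}}} = 2^{-\tau d^{\bigo{(n)}}}$. Taking $N = \lceil \log_2 (1/\lambda^\star) \rceil$ then concludes.

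The main obstacle will be selecting the effective estimate with sufficient care. A naive application of general Basu--Pollack--Roy bounds to a semi-algebraic set in $s^2$ variables returns a bound singly exponential in $s^2$, hence doubly exponential in $n$, which is too weak. To reach the claimed singly exponential bound in $n$ one must exploit the \emph{linearity} of the Gram constraint $f = v_k^T G v_k$ in the entries of $G$ and the convexity of the PSD cone—i.e.\ use effective estimates adapted to SDPs with rational data rather than to arbitrary quantifier-free formulas over the reals—so that the final bound depends polynomially on $s$ rather than exponentially.
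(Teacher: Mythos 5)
Your first half is exactly the paper's argument: write $t = v_k^T I v_k$, take a positive definite Gram matrix $G$ of $f$ via Theorem~\ref{th:intsospdGram}, and observe that $G-\varepsilon I\succ 0$ for any $\varepsilon<\lambda_{\min}(G)$. That part is fine.

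The quantitative half has a genuine gap, and it is precisely the obstacle you name without overcoming. You reduce the claim to a lower bound on $\lambda^\star=\sup\{\lambda: G\succeq \lambda I,\ f=v_k^TGv_k\}$, an SDP whose matrix size $s$ and number of affine constraints $m$ are both $d^{\bigo{(n)}}$, and you then assert that ``Porkolab--Khachiyan style estimates'' adapted to SDPs give $\lambda^\star\geq 2^{-\tau s^{\bigo{(1)}}}$. No such estimate is available in this generality: the known bit-size bounds for feasible points and optimal values of semidefinite programs with integer data are of order $\tau\, s^{\bigo{(\min\{m,s^2\})}}$, i.e.\ exponential in $\min\{m,s^2\}$, not polynomial in $s$ (this is consistent with the algebraic degree of SDP~\cite{NRS10}, which grows exponentially in the matrix size). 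Since here $\min\{m,s^2\}=d^{\bigo{(n)}}$, this route gives $2^{-\tau d^{\bigo{(n)}\cdot d^{\bigo{(n)}}}}$, doubly exponential in $n$ --- the same loss you flag for the naive Basu--Pollack--Roy application. The assertion that linearity of the Gram constraint and convexity of the PSD cone rescue a polynomial-in-$s$ bound is exactly the missing idea, and it is not a routine one.

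The paper sidesteps Gram-matrix space entirely: it works with the single perturbation parameter $e$ and the set $A=\{e\in\R:\ \forall \bmx\in\R^n,\ f(\bmx)-e\sum_{\alpha\in P/2}\bmx^{2\alpha}\geq 0\}$. Quantifier elimination (\cite[Thm~14.16]{BPR06}) over the $n$ \emph{original} variables describes $A$ by univariate polynomials in $e$ of degree $d^{\bigo{(n)}}$ with coefficients of bit size $\tau d^{\bigo{(n)}}$, so the positive endpoint of $A$ is bounded below by $2^{-\tau d^{\bigo{(n)}}}$. The exponentially many Gram entries never appear as variables in the elimination, which is what keeps the bound singly exponential in $n$. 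If you want to keep your SDP formulation, you would need to supply (or prove) an eigenvalue lower bound for spectrahedra that is genuinely polynomial in the matrix size, which is a much stronger statement than what you cite.
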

\begin{proof}
  Let $v_k$ be the vector of all monomials $X^\alpha$ in $P/2$. Note
  that each monomial in $v_k$ has degree $\leq k$ and that
  $v_k^T v_k = \sum_{\alpha \in P/2} X^{2 \alpha}$. Since
  $f \in \mathring{\Sigma}[X]$, there exists by Theorem~\ref{th:intsospdGram} a 
  matrix $G \succ 0$ such that $f =v_k^T G v_k$, with positive smallest eigenvalue
  $\lambda$. Let us define
  $N := \lceil \log_2 \frac{1}{\lambda} \rceil + 1$, i.e.~the smallest
  integer such that
  $\varepsilon = \frac{1}{2^N} \leq \frac{\lambda}{2}$. Then,
  $\lambda > \varepsilon$ and the matrix $G - \varepsilon I$ has only
  positive eigenvalues. Hence, one has
\[f_\varepsilon := f - \varepsilon \sum_{\alpha \in P/2} X^{2 \alpha}
  = v_k^T G v_k - \varepsilon v_k^T I v_k = v_k^T (G - \varepsilon I)
  v_k \,,\]
  yielding $f_\varepsilon \in \mathring{\Sigma}[X]$.  
  
  For the second
  claim, let us consider the set
  $A := \{e \in \R : \forall \bmx \in \R^n, f(\bmx) - e \sum_{\alpha
    \in P/2} \bmx^{2 \alpha} \geq 0\}$.
  Using \cite[Thm 14.16]{BPR06}, $A$ is defined by univariate
  polynomials of degree in $d^{\bigo{(n)}}$ with coefficients of bit
  size bounded by $\tau d^{\bigo{(n)}}$. Hence the bit size of the
  mimimum absolute value of their non-zero real roots is below bounded
  by $\tau d^{\bigo{(n)}}$.
\end{proof}
The following can be found in~\cite[Lemma~2.1]{Bai89}
and~\cite[Theorem~3.2]{Bai89}.
\begin{proposition}
\label{th:boundchol}
Let $\tilde{G} \succ 0$ be a matrix with
rational entries indexed on $\N^n_r$. Let $L$ be the factor of $\tilde{G}$
computed using Cholesky's decomposition with finite precision
$\delta_c$. Then $L L^T = \tilde{G} + E$ where
\begin{align}
\label{eq:chol1}
|E_{\alpha, \beta}| \leq {(r+1)2^{-\delta_c}} |\tilde{G}_{\alpha,\alpha} \, \tilde{G}_{\beta,\beta}|^{\frac{1}{2}} / (1 - (r+1)2^{-\delta_c})  \,.
\end{align}
In addition, if the smallest eigenvalue 
$\tilde{\lambda}$ of $\tilde{G}$ satisfies the  inequality
\begin{align}
\label{eq:chol2}
2^{-\delta_c} < \tilde{\lambda} / (r^2 + r + (r-1) \tilde{\lambda}) \,,
\end{align}
Cholesky's decomposition returns a rational nonsingular factor $L$.
\end{proposition}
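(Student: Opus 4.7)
The plan is to follow the classical backward error analysis of Cholesky factorization in finite precision, as developed in~\cite{Bai89}, by tracking how rounding errors propagate through the defining recursion. Writing $u := 2^{-\delta_c}$ for the unit roundoff, Cholesky computes $L$ column by column through
\[L_{\alpha,\alpha} \,=\, \Bigl(\tilde{G}_{\alpha,\alpha} - \sum_{\gamma<\alpha} L_{\alpha,\gamma}^2\Bigr)^{1/2}, \qquad L_{\beta,\alpha} \,=\, \Bigl(\tilde{G}_{\beta,\alpha} - \sum_{\gamma<\alpha} L_{\beta,\gamma}L_{\alpha,\gamma}\Bigr)\big/L_{\alpha,\alpha},\]
and in precision $u$ every multiplication, addition and square root introduces a relative error of size at most $u$, which accumulates linearly in the length of the running inner product.

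First, I would establish the componentwise bound~\eqref{eq:chol1}. Setting $E := LL^T - \tilde{G}$, each entry $E_{\alpha,\beta}$ is exactly the roundoff incurred in a length-$r$ inner product composed with one division (or one square root). A standard lemma on inner products in floating-point arithmetic bounds this error by $(r+1)u$ times the product of the $2$-norms of the two rows of $L$ involved. Since the diagonal identity $\sum_\gamma L_{\alpha,\gamma}^2 = \tilde{G}_{\alpha,\alpha} + E_{\alpha,\alpha}$ shows $\|L_\alpha\|_2^2 \leq \tilde{G}_{\alpha,\alpha} + |E_{\alpha,\alpha}|$, applying Cauchy--Schwarz yields a self-referential inequality on $|E_{\alpha,\beta}|$; solving it produces precisely the geometric-series denominator $1 - (r+1)u$ in~\eqref{eq:chol1}.

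Second, for the nonsingularity claim, I would argue that~\eqref{eq:chol2} is designed exactly to force the spectral perturbation $\|E\|_2$ to remain strictly below $\tilde{\lambda}$, so by Weyl's eigenvalue inequality the perturbed Gram $\tilde{G}+E$ stays positive definite throughout the recursion, and hence no computed diagonal $L_{\alpha,\alpha}$ can vanish. Concretely, combining the Frobenius bound $\|E\|_2 \leq r\max_{\alpha,\beta}|E_{\alpha,\beta}|$ with~\eqref{eq:chol1} and the elementary estimate $|\tilde{G}_{\alpha,\alpha}| \leq \trace(\tilde{G}) \leq r \cdot \lambda_{\max}(\tilde{G})$ (bounded via $\tilde{\lambda}$ using the arithmetic encoded in the $r^2+r+(r-1)\tilde{\lambda}$ denominator), a direct computation verifies $\|E\|_2 < \tilde{\lambda}$, and since all inputs and operations are rational, the resulting $L$ is a rational nonsingular factor.

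The main obstacle is the implicit nature of the error bound: $|E_{\alpha,\beta}|$ is controlled by the norms of the rows of $L$, which themselves contain $E$. Resolving this circularity---rather than the algebra of the final inequality---is the technical heart of Bai's analysis and the reason for the sharp factor $1/(1-(r+1)u)$ in~\eqref{eq:chol1}.
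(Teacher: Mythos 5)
The paper does not prove this proposition at all: it is quoted directly from \cite[Lemma~2.1]{Bai89} and \cite[Theorem~3.2]{Bai89}, so there is no in-paper argument to compare yours against, and your sketch has to be judged as a reconstruction of that reference. For the componentwise bound~\eqref{eq:chol1} your outline is the right one and matches the standard analysis: bound the roundoff of each length-$\leq r$ inner product followed by one division or square root by $(r+1)2^{-\delta_c}$ times $\|L_\alpha\|_2\|L_\beta\|_2$, use the diagonal identity $\|L_\alpha\|_2^2=\tilde G_{\alpha,\alpha}+E_{\alpha,\alpha}$ to make the inequality self-referential, and solve to get the denominator $1-(r+1)2^{-\delta_c}$. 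This is exactly the content of Lemma~2.1 in \cite{Bai89}.

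The second half of your sketch has a genuine gap. You propose to verify $\|E\|_2<\tilde\lambda$ by estimating $\max_\alpha \tilde G_{\alpha,\alpha}\leq \trace(\tilde G)\leq r\,\lambda_{\max}(\tilde G)$ and then controlling $\lambda_{\max}$ ``via $\tilde\lambda$'' through the denominator $r^2+r+(r-1)\tilde\lambda$. But the smallest eigenvalue gives no upper bound on the largest eigenvalue or on the diagonal entries, so this computation cannot close; indeed the fact that~\eqref{eq:chol2} contains no quantity measuring the size of $\tilde G$ is the tell that the analysis must be carried out on the diagonally scaled matrix $H=D^{-1}\tilde G D^{-1}$ with $D=\mathrm{diag}(\tilde G_{\alpha,\alpha}^{1/2})$, whose diagonal is identically $1$, so that the factors $|\tilde G_{\alpha,\alpha}\tilde G_{\beta,\beta}|^{1/2}$ drop out of~\eqref{eq:chol1} and only an eigenvalue of the scaled matrix enters. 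A second, related problem is circularity: the bound~\eqref{eq:chol1} is only valid conditionally on the factorization running to completion, so a single terminal application of Weyl's inequality cannot establish that completion occurs; the argument in \cite{Bai89} inducts on the leading principal submatrices, showing at each step that the computed pivot $\tilde G_{\alpha,\alpha}-\sum_\gamma L_{\alpha,\gamma}^2$ stays positive. Finally, the rationality of $L$ is not a perturbation statement: it holds simply because $\delta_c$-bit arithmetic on rational inputs (with square roots truncated to $\delta_c$ bits) stays in $\Q$, provided no pivot vanishes.
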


\subsection{Algorithm~\texttt{intsos}}
\label{sec:algo}
We present our algorithm $\intsos$ computing exact weighted rational SOS
decompositions for polynomials in $\mathring{\Sigma}_\Z[X]$. 

{
\begin{algorithm}[t]
\caption{$\intsos$
}
\label{alg:intsos}
\begin{algorithmic}[1]
\Require $f \in \Z[X]$, positive $\varepsilon \in \Q$, precision parameters $\delta, R \in \N$ for the SDP solver 
,  precision $\delta_c \in \N$ for the Cholesky's decomposition
\Ensure list $\clist$ of numbers in $\Q$ and list $\slist$ of polynomials in $\Q[X]$
\State $P := \polytope{(f)}$ \label{line:np}
\State $t := \sum_{\alpha \in P/2} X^{2 \alpha}$, $f_\varepsilon \gets f - \varepsilon t$
\While {$f_\varepsilon \notin \mathring{\Sigma}[X]$} \label{line:epsi}
 $\varepsilon \gets \frac{\varepsilon}{2}$, $f_\varepsilon \gets f - \varepsilon t$
\EndWhile \label{line:epsf}
\State ok := false
\While {not ok} \label{line:deltai}
\State $(\tilde{G}, \tilde{\lambda}) \gets \sdpfun{f_\varepsilon}{ \delta}{R}$ \label{line:sdp}
\State $(s_1,\dots,s_r) \gets \choleskyfun{\tilde{G}}{\tilde{\lambda}}{\delta_c}$ \label{line:chol} \Comment{$f_\varepsilon \simeq \sum_{i=1}^r  s_i^2$}
\State $u \gets f_\varepsilon - \sum_{i=1}^r  s_i^2$
\State $\clist \gets [1,\dots,1] $, $\slist \gets [s_1,\dots,s_r]$
\For {$\alpha \in P/2$}  $\varepsilon_{\alpha} := \varepsilon$
\EndFor
\State $\clist, \slist,(\varepsilon_\alpha) \gets \absorbfun{u}{P}{(\varepsilon_\alpha)}{\clist}{\slist}$ \label{line:absorb}
\If {$\min_{\alpha \in P/2} \{ \varepsilon_\alpha \} \geq 0$} ok := true \label{line:sosok}
\Else $\ \delta \gets 2 \delta$, $R \gets 2 R$, $\delta_c \gets 2 \delta_c$
\EndIf
\EndWhile \label{line:deltaf}
\For {$\alpha \in P/2$}  
\State $\clist \gets  \clist \cup \{ \varepsilon_\alpha \}$, $\slist \gets  \slist \cup \{ X^\alpha \}$
\EndFor
\State \Return $\clist$, $\slist$
\end{algorithmic}
\end{algorithm}
}

{
\begin{algorithm}
\caption{$\absorb$
}
\label{alg:absorb}
\begin{algorithmic}[1]
\Require $u \in \Q[X]$, multi-index set $P$, lists $(\varepsilon_\alpha)$ and $\clist$ of numbers in $\Q$, list $\slist$ of polynomials in $\Q[X]$
\Ensure lists $(\varepsilon_\alpha)$ and $\clist$ of numbers in $\Q$, list $\slist$ of polynomials in $\Q[X]$
\For {$\gamma \in \spt{u}$} \label{line:absi}
\If {$\gamma \in (2\N)^n$} $\alpha := \frac{\gamma}{2}$, $\varepsilon_\alpha := \varepsilon_\alpha + u_\gamma$ \label{line:intsoseven}
\Else \State Find $\alpha$, $\beta \in P/2$ such that $\gamma = \alpha + \beta $ \label{line:intsosodd}
\State $\varepsilon_\alpha := \varepsilon_\alpha - \frac{|u_\gamma|}{2}$, $\varepsilon_\beta := \varepsilon_\beta - \frac{|u_\gamma|}{2}$
\State $\clist \gets  \clist \cup \{ \frac{|u_\gamma|}{2} \}$
\State $\slist \gets  \slist \cup \{ X^\alpha + \sgn{(u_\gamma)} X^\beta \}$
\EndIf
\EndFor \label{line:absf}
\end{algorithmic}
\end{algorithm}
}

Given $f \in \Z[X]$ of degree $d = 2k$, one first 
computes its Newton polytope $P := \polytope{(f)}$ (see
line~\lineref{line:np}) using standard algorithms such as
quickhull~\cite{Barber96}. The
loop going from line~\lineref{line:epsi} to line~\lineref{line:epsf} finds a positive
 $\varepsilon \in \Q$ such that the perturbed polynomial
$f_\varepsilon := f - \varepsilon \sum_{\alpha \in P/2} X^{2 \alpha}$
is also in $\mathring{\Sigma}[X]$. This is done thanks to 
an oracle based on SDP or computer algebra procedures (e.g.~CAD or critical points). If $f \in \mathring{\Sigma}_\Z[X]$, the existence of $\varepsilon$ is
ensured as in the proof of Theorem~\ref{th:boundeps} if $A := \{e \in \R : \forall \bmx \in \R^n, f(\bmx) - e \sum_{\alpha
    \in P/2} \bmx^{2 \alpha} \geq 0\}$ is non empty.

Next, we enter in the loop starting from line~\lineref{line:deltai}.  Given $f_\varepsilon \in \Z[X]$, positive integers $\delta$
and $R$, the $\sdp$ function calls an SDP solver and tries to
compute a rational approximation $\tilde{G}$ of the Gram matrix
associated to $f_\varepsilon$ together with a rational approximation
$\tilde{\lambda}$ of its smallest eigenvalue. 
In practice, we use an arbitrary-precision SDP solver implemented with an interior-point method. However, in order to analyse the
complexity of the procedure (see Remark~\ref{rk:ellipsoidIP}), we assume that $\sdp$ relies on
the ellipsoid algorithm~\cite{GroetschelLovaszSchrijver93}.
\begin{remark}
\label{rk:ellipsoidIP}
In~\cite{deKlerkSDP}, the authors analyze the complexity of the short step, primal interior point method, used in SDP solvers. 
Within fixed accuracy, they obtain a polynomial complexity, as for the ellipsoid method, but the exact value of the exponents is not provided.
\end{remark}
SDP problems are solved with this latter algorithm in polynomial-time within a
given accuracy $\delta$ and a radius bound $R$ on the Frobenius norm of $\tilde{G}$.
The first
step consists of solving SDP~\eqref{eq:dualsdp} by computing an
approximate Gram matrix $\tilde{G} \succeq 2^{- \delta} I $ such that
$|\trace{( \tilde{G} B_\gamma)} - (f_\varepsilon)_{\gamma}| = | \sum_{\alpha+\beta = \gamma} \tilde{G}_{\alpha,\beta} -
(f_\varepsilon)_{\gamma}| \leq
2^{-\delta}$ 
and $\sqrt{\trace{(\tilde{G}^2)}} \leq R$.
We pick large enough $\delta$ and $R$ to obtain $\tilde{G} \succ 0$ and $\tilde{\lambda} > 0$ when $f_\varepsilon \in \mathring{\Sigma}[X]$.

The $\cholesky$ function computes the approximate
Cholesky's decomposition $L L^T$ of $\tilde{G}$ with precision
$\delta_c$. In order to guarantee that $L$ will be a rational
nonsingular matrix, a preliminary step consists of verifying that the
inequality from~\eqref{eq:chol2} holds, which happens when $\delta_c$
is large enough. Otherwise, $\cholesky$  selects the
smallest $\delta_c$ such as~\eqref{eq:chol2} holds.  Let $v_k$ be the
vector of all monomials $X^\alpha$ belonging to $P/2$ with size $r$. The output is a list of
rational polynomials $[s_1,\dots,s_r]$ such that for all
$i=1,\dots,r$, $s_i$ is the inner product of the $i$-th row of $L$
by $v_k$. By Theorem~\ref{th:lasunconstrained}, one would have $f_\varepsilon = \sum_{i=1}^r s_i^2$ with $s_i \in \R[X]$ after using exact SDP and Cholesky's decomposition. Here, we have to consider  the remainder $u = f - \varepsilon \sum_{\alpha \in P/2} X^{2 \alpha} - \sum_{i=1}^r s_i^2$, with $s_i \in \Q[X]$. 

After these numeric steps, the algorithm starts to perform symbolic computation with the $\absorb$ subroutine at line~\lineref{line:absorb}.
The loop from $\absorb$ is designed to obtain an exact weigthed SOS decomposition of $\varepsilon t + u = \varepsilon \sum_{\alpha \in P/2} X^{2 \alpha} + \sum_{\gamma} u_\gamma X^\gamma$, yielding in turn an exact decomposition of $f$. Each term $u_\gamma X^\gamma$ can be written either $u_\gamma X^{2 \alpha}$ or $u_\gamma X^{\alpha + \beta}$, for $\alpha, \beta \in P/2$. In the former case (line~\lineref{line:intsoseven}), one has $\varepsilon X^{2 \alpha} + u_\gamma X^{2 \alpha} = (\varepsilon + u_\gamma) X^{2 \alpha}$.
In the latter case (line~\lineref{line:intsosodd}), one has 
\[
\varepsilon (X^{2 \alpha} +  X^{2 \beta}) + u_\gamma X^{\alpha + \beta} = |u_{\gamma}|/2 (X^{\alpha} + \sgn{(u_\gamma)} X^{\beta})^2 + (\varepsilon - |u_{\gamma}|/2) (X^{2 \alpha} + X^{2 \beta}) \,.\]
If the positivity test of line~\lineref{line:sosok} fails, then the coefficients of $u$ are too large and one cannot ensure that $\varepsilon t + u$ is SOS. So we repeat the same procedure after increasing the precision of the SDP solver and Cholesky's decomposition.

In prior work~\cite{univsos}, the authors and Schweighofer formalized and analyzed an algorithm called $\univsostwo$, initially provided in~\cite{Chevillard11}. Given a univariate polynomial $f > 0$ of degree $d = 2k$, this algorithm computes weighted SOS decompositions of $f$. With $t := \sum_{i=0}^k X^{2 i}$, the first numeric step of $\univsostwo$ is to find $\varepsilon$ such that the perturbed polynomial $f_\varepsilon := f - \varepsilon t > 0$ and to compute its complex roots, yielding an approximate SOS decomposition $s_1^2 + s_2^2$. The second symbolic step is very similar to the loop from line~\lineref{line:absi} to line~\lineref{line:absf} in $\intsos$: one considers the remainder polynomial $u := f_\varepsilon - s_1^2 - s_2^2$ and tries to computes an exact SOS decomposition of $\varepsilon t + u$. This succeeds for large enough precision of the root isolation procedure.
Therefore, $\intsos$ can be seen as an extension of $\univsostwo$ in the multivariate case by replacing the numeric step of root isolation by SDP and keeping the same symbolic step.

\begin{example}
\label{ex:intsos}
We apply Algorithm~$\intsos$ on $f = 4 X_1^4 + 4 X_1^3 X_2 - 7 X_1^2 X_2^2 - 2 X_1 X_2^3 + 10 X_2^4$, with $\varepsilon = 1$, $\delta = R = 60$ and $\delta_c = 10$. Then $P/2 := \polytope{(f)}/2 = \{(2,0),(1,1),(0,2)\}$ (line~\lineref{line:np}). The loop from line~\lineref{line:epsi} to line~\lineref{line:epsf} ends and we get $f - \varepsilon t = f - (X_1^4 + X_1^2 X_2^2 + X_2^2) \in  \mathring{\Sigma}[X]$. The $\sdp$ (line~\lineref{line:sdp}) and $\cholesky$ (line~\lineref{line:chol}) procedures yield $s_1 = 2 X_1^2+ X_1 X_2- \frac{8}{3} X_2^2$, $s_2 = \frac{4}{3} X_1 X_2+ \frac{3}{2} X_2^2$ and $s_3 =  \frac{2}{7} X_2^2$. The remainder polynomial is $u = f- \varepsilon t - s_1^2 - s_2^2 - s_3^2 = -  X_1^4- \frac{1}{9} X_1^2 X_2^2- \frac{2}{3} X_1 X_2^3- \frac{781}{1764} X_2^4$.

At the end of the loop from line~\lineref{line:absi} to line~\lineref{line:absf}, we obtain $\varepsilon_{(2,0)} = (\varepsilon -  X_1^4 = 0$, which is the coefficient of $X_1^4$ in $\varepsilon t + u$.  
Then, $\varepsilon (X_1^2 X_2^2 + X_2^4) - \frac{2}{3} X_1 X_2^3 = \frac{1}{3} (X_1 X_2 - X_2^2)^2 + (\varepsilon - \frac{1}{3}) (X_1^2 X_2^2 + X_2^4)$. In the polynomial $\varepsilon t + u$, the coefficient of $X_1^2 X_2^2$ is $\varepsilon_{(1,1)} = \varepsilon - \frac{1}{3} - \frac{1}{9} = \frac{5}{9}$ and the coefficient of $X_4^4$ is $\varepsilon_{(0,2)} = \varepsilon - \frac{1}{3} - \frac{781}{1764} = \frac{395}{1764}$. 

Eventually, we obtain the weighted rational SOS decomposition: $4 X_1^4 + 4 X_1^3 X_2 - 7 X_1^2 X_2^2 - 2 X_1 X_2^3 + 10 X_2^4 = \frac{1}{3} (X_1 X_2-X_2^2)^2 + \frac{5}{9} (X_1 X_2)^2 + \frac{395}{1764} X_2^4 + (2 X_1^2+ X_1 X_2- \frac{8}{3} X_2^2)^2 + (\frac{4}{3} X_1 X_2+ \frac{3}{2}  X_2^2)^2+(\frac{2}{7} X_2^2)^2)$.
\end{example}
\subsection{Correctness and bit size of the output}
\label{sec:bitsize}
%
%
Let $f \in \mathring{\Sigma}_\Z[X]$ of degree $d = 2k$,
$\tau = \tau(f)$ and $P = \polytope(f)$.
\begin{proposition}
\label{th:boundR}
Let $G$ be a positive definite Gram matrix associated to $f$ and
$0 < \epsilon \in \Q$ be such that
$f_\varepsilon = f - \varepsilon \sum_{\alpha \in P/2} X^{2
  \alpha}\in \mathring{\Sigma}[X]$.
Then, there exist positive integers $\delta$, $R$ such that $G - \varepsilon I$ is a Gram matrix associated to $f_\varepsilon$, satisfies
$G - \varepsilon I \succeq 2^{- \delta} I$ and
$\sqrt{\trace{({G-\varepsilon I}^2)}} \leq R$. Also, the maximal
bit sizes of $\delta$ and $R$ are upper bounded by
$\tau d^{\bigo{(n)}}$.
\end{proposition}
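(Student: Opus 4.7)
The plan is to mirror the effective quantifier elimination argument used in the proof of Proposition~\ref{th:boundeps}, bringing in the entries of the Gram matrix as auxiliary variables. First I would dispatch the Gram matrix identity: since $G$ is a Gram matrix for $f$ one has $f = v_k^T G v_k$, and using $v_k^T v_k = \sum_{\alpha \in P/2} X^{2\alpha}$ one gets $v_k^T(G - \varepsilon I) v_k = f_\varepsilon$. Existence of such a $G$ for which $G - \varepsilon I$ is positive definite follows from Theorem~\ref{th:intsospdGram} applied to $f_\varepsilon \in \mathring{\Sigma}[X]$: take a positive definite Gram matrix $\tilde G$ of $f_\varepsilon$ and set $G := \tilde G + \varepsilon I$, which is a positive definite Gram matrix of $f$.

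Next I would bound $\delta$. Letting $\lambda'$ denote the smallest eigenvalue of $G - \varepsilon I$, any $\delta \geq \log_2(1/\lambda')$ works. To bound $\lambda'$ from below, I would introduce the semi-algebraic set
\[
B := \{ \mu \in \R : \exists\, \tilde G = (\tilde g_{\alpha,\beta})_{\alpha,\beta \in P/2},\ v_k^T \tilde G v_k = f_\varepsilon,\ \tilde G - \mu I \succeq 0 \} \,.
\]
The quantified variables are the $\binom{n+k}{k}^2 = d^{\bigo{(n)}}$ entries of $\tilde G$; the defining polynomials are the linear equations matching coefficients of $f_\varepsilon$ together with the non-negativity of the principal minors of $\tilde G - \mu I$, all of degree at most $d^{\bigo{(n)}}$. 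By Theorem~\ref{th:boundeps} the coefficients of $f_\varepsilon$ have bit size $\tau d^{\bigo{(n)}}$, so by~\cite[Thm 14.16]{BPR06} the projection $B$ is described by univariate polynomials in $\mu$ of degree $d^{\bigo{(n)}}$ with coefficients of bit size $\tau d^{\bigo{(n)}}$. Hence the supremum $\mu^*$ of $B$ is non-zero with bit size $\tau d^{\bigo{(n)}}$, and any $\delta$ slightly exceeding $\log_2(1/\mu^*)$ has the required bit size bound.

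Finally I would bound $R$ by the same route, replacing the eigenvalue constraint by the Frobenius norm constraint: consider the semi-algebraic set of $\rho \in \R$ for which $\exists \tilde G$ with $v_k^T \tilde G v_k = f_\varepsilon$, $\tilde G \succeq 0$, and $\trace(\tilde G^2) \leq \rho^2$. Applying~\cite[Thm 14.16]{BPR06} once more yields an upper bound on the infimum $\rho^*$ whose bit size is $\tau d^{\bigo{(n)}}$, so any integer $R \geq \rho^*$ of bit size $\tau d^{\bigo{(n)}}$ works. The main obstacle is the bookkeeping: one must carefully argue that the quantifier block has $d^{\bigo{(n)}}$ variables with defining polynomials of degree $d^{\bigo{(n)}}$ and input bit size $\tau d^{\bigo{(n)}}$ so that BPR's single-exponential bound in the number of variables yields the stated estimate; the positive semidefiniteness encoding via principal minors is what keeps the degree and bit size under control.
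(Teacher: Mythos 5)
Your opening paragraph (the Gram identity $v_k^T(G-\varepsilon I)v_k = f_\varepsilon$ and the construction of a $G$ with $G-\varepsilon I \succ 0$ from Theorem~\ref{th:intsospdGram}) is fine and consistent with the paper. The gap is in the quantifier-elimination step. The bound of \cite[Thm 14.16]{BPR06} is singly exponential in the number of \emph{quantified} variables: eliminating a block of $K$ variables from polynomials of degree $D$ with coefficients of bit size $\tau'$ yields univariate descriptions of degree $D^{\bigo{(K)}}$ and bit size $\tau' D^{\bigo{(K)}}$. In your set $B$ the existential block consists of the $r^2$ entries of $\tilde G$, where $r = \binom{n+k}{k}$ can be as large as $d^{n}$, and the principal minors have degree up to $r$ in those entries. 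Plugging $K = d^{\bigo{(n)}}$ and $D = d^{\bigo{(n)}}$ into the theorem gives degree and bit-size bounds of order $\bigl(d^{\bigo{(n)}}\bigr)^{d^{\bigo{(n)}}} = 2^{d^{\bigo{(n)}}}$, i.e.\ doubly exponential in $n$ --- not the claimed $\tau d^{\bigo{(n)}}$. (Worse, positive semidefiniteness needs all $2^r-1$ principal minors, so even writing your formula down costs doubly exponential space.) The ``bookkeeping'' you defer to at the end is exactly where the argument breaks: the single-exponential savings of the critical-point-style bounds is lost once the quantified block itself has exponentially many variables.

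The paper avoids this by never quantifying over matrix entries. Spectral bounds on $G$ are converted into polynomial nonnegativity over $\R^n$: with $t=\sum_{\alpha\in P/2}X^{2\alpha}$, if $\lambda$ (resp.\ $\lambda'$) is the smallest (resp.\ largest) eigenvalue of $G$, then $f-e\,t=v_k^T(G-eI)v_k\geq 0$ on $\R^n$ for all $e\leq\lambda$, and $e'\,t-f\geq 0$ for all $e'\geq\lambda'$. The resulting sets $A,A'\subseteq\R$ are projections of formulas with only $n$ quantified variables and degree-$d$ polynomials, so \cite[Thm 14.16]{BPR06} yields univariate descriptions of degree $d^{\bigo{(n)}}$ and bit size $\tau d^{\bigo{(n)}}$. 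The bound on $\delta$ then comes from Proposition~\ref{th:boundeps} (take $\delta=N+1$ so that $2^{-\delta}\leq\lambda/2$ and $G-\varepsilon I\succeq\tfrac{\lambda}{2}I\succeq 2^{-\delta}I$), and the bound on $R$ from a rational upper bound $\varepsilon'$ on $\lambda'$ of bit size $\tau d^{\bigo{(n)}}$ combined with $\trace(G^2)\leq\binom{n+k}{n}\varepsilon'^2\leq d^n\varepsilon'^2$. To repair your proof you should replace the minor-based encoding by this reduction to $n$ quantified variables.
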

\begin{proof}
  Let $\lambda$ be the smallest eigenvalue of $G$. By
  Proposition~\ref{th:boundeps}, $G \succeq \varepsilon I$ for
  $\varepsilon = \frac{1}{2^N} \leq \frac{\lambda}{2}$. With
  $\delta = N+1$, 
  $2^{-\delta} = \frac{1}{2^{N+1}} \leq \frac{\lambda}{4} <
  \frac{\lambda}{2}$,
  yielding
  $G - \varepsilon \succeq \frac{\lambda}{2} I \succeq 2^{- \delta}I$.
  As $N \leq \tau d^{\bigo{(n)}}$, one has
  $\delta \leq \tau d^{\bigo{(n)}}$.

  As in the proof of Proposition~\ref{th:boundeps}, we consider the
  largest eigenvalue $\lambda'$ of the Gram matrix $G$ of $f$ and
  prove that the set
  $A':= \{e' \in \R : \forall \x \in \R^n, - f(\x) + e' \sum_{\alpha
    \in P/2} \x^{2 \alpha} \geq 0\}$
  is not empty. We use again \cite[Thm 14.16]{BPR06} to prove that
  $A'$ contains an interval $]0, \frac{1}{2^N}[$ with
  $N \leq \tau d^{\bigo{(n)}}$. This allows in turn to obtain a
  rational upper bound $\varepsilon'$ of $\lambda'$ with bit size
  $\tau d^{\bigo{(n)}})$. The size of $G$ is bounded by
  $\binom{n+k}{n}$, thus the trace of $G^2$ is less than 
  $\binom{n+k}{n} \varepsilon'^2$. Using that for all $k \geq 2$, 
\[\binom{n+k}{n} = \frac{(n+k) \cdots (k+1)}{n!} =
(1+\frac{k}{n})(1+\frac{k}{n-1})\cdots(1+k)     \leq k^{n-1} (1+k) \leq 2 k^n \leq d^n \,,
\] 
one has
  $\sqrt{\trace{({G-\varepsilon I})^2}} \leq d^{\frac{n}{2}}
  \varepsilon' = \tau d^{\bigo{(n)}}$.
\end{proof}
\begin{proposition}
\label{th:bitmultivsos}
Let $f$ be as above. When applying Algorithm~$\intsos$ to $f$, the
procedure always terminates and outputs a weighted rational SOS decompositon of
$f$. The maximum bit size of the coefficients involved in this SOS
decomposition is upper bounded by $\tau d^{\bigo{(n)}}$.
\end{proposition}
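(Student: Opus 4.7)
The plan is to separately establish termination of the two \texttt{while} loops and then bound the bit sizes of all quantities appearing in $\clist$ and $\slist$. Throughout, I let $P = \polytope{(f)}$, $t = \sum_{\alpha \in P/2} X^{2 \alpha}$, and $r = \binom{n+k}{n} \leq d^n$, with $v_k$ the vector of monomials indexed by $P/2$ as in the preliminaries.

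Termination of the outer loop (lines 3--4) follows directly from Proposition~\ref{th:boundeps}: there exists $N \leq \tau d^{\bigo{(n)}}$ such that $f - 2^{-N} t \in \mathring{\Sigma}[X]$, so halving $\varepsilon$ at each iteration produces, after finitely many steps, an admissible $\varepsilon$ with $\tau(\varepsilon) \leq \tau d^{\bigo{(n)}}$. For the inner loop (lines 6--14), I use Proposition~\ref{th:boundR} to pick target precisions $\delta, R$ of bit size $\tau d^{\bigo{(n)}}$ guaranteeing $\tilde G \succeq 2^{-\delta} I$ and $\sqrt{\trace(\tilde G^2)} \leq R$; then Proposition~\ref{th:boundchol} gives a threshold $\delta_c$ of bit size $\tau d^{\bigo{(n)}}$ under which Cholesky returns a nonsingular rational factor $L$ and a perturbation $E = L L^T - \tilde G$ bounded by~\eqref{eq:chol1}. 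Writing $u = f_\varepsilon - \sum_{i=1}^r s_i^2 = v_k^T (\tilde G - L L^T) v_k + (f_\varepsilon - v_k^T \tilde G v_k)$, the bounds of~\eqref{eq:chol1} combined with the SDP accuracy $2^{-\delta}$ yield a control of $\|u\|_\infty$ by a polynomial in $2^{-\min(\delta,\delta_c)}$ and the size of $\tilde G$. Since in $\absorb$ each $\varepsilon_\alpha$ is decremented at most $r^2$ times by at most $\|u\|_\infty/2$, choosing the precisions so that $r^2\|u\|_\infty < \varepsilon$ forces every $\varepsilon_\alpha \geq 0$ and the loop to exit; as the algorithm doubles $(\delta, R, \delta_c)$ at each failure, this threshold is met within $\bigo{(\log(\tau d^{\bigo{(n)}}))}$ iterations.

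Correctness then reduces to summing the identities used by $\absorb$: for $\gamma = 2\alpha$ the term $u_\gamma X^{2\alpha}$ is absorbed into $\varepsilon_\alpha X^{2\alpha}$ directly, whereas for $\gamma = \alpha + \beta$ with $\alpha \neq \beta$ one applies
\[
\varepsilon(X^{2\alpha} + X^{2\beta}) + u_\gamma X^{\alpha+\beta} = \tfrac{|u_\gamma|}{2}\bigl(X^\alpha + \sgn(u_\gamma) X^\beta\bigr)^2 + \bigl(\varepsilon - \tfrac{|u_\gamma|}{2}\bigr)(X^{2\alpha} + X^{2\beta}),
\]
to split the mass between a new square and updates of $\varepsilon_\alpha, \varepsilon_\beta$. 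Once the positivity test succeeds, collecting every contribution yields $f = \sum_{i=1}^r s_i^2 + \sum_\gamma c_\gamma p_\gamma^2 + \sum_{\alpha \in P/2} \varepsilon_\alpha X^{2\alpha}$, a weighted rational SOS decomposition of $f$. For the bit size bound I chain: $\tau(\delta), \tau(R), \tau(\delta_c) \leq \tau d^{\bigo{(n)}}$, hence each entry of $\tilde G$ is of this size; the Cholesky factor $L$ and thus the coefficients of each $s_i$ inherit bit size $\tau d^{\bigo{(n)}}$; the remainder $u$ and the $|u_\gamma|/2$ stored in $\clist$ have the same order; and the final $\varepsilon_\alpha$, being $\varepsilon$ minus a sum of $r^2$ such terms, also satisfies $\tau(\varepsilon_\alpha) \leq \tau d^{\bigo{(n)}}$.

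The main obstacle, as I see it, is making the precision-threshold analysis of the inner loop quantitative enough: one must combine the Cholesky perturbation bound~\eqref{eq:chol1} with the SDP accuracy to majorize $\|u\|_\infty$ by a quantity comparable to $\varepsilon/r^2$, and then verify that achieving this requires only a precision of bit size $\tau d^{\bigo{(n)}}$, despite the factor $r = \bigo{(d^n)}$ entering the denominator. Once that quantitative comparison is nailed down, the termination count, the correctness identity, and the bit size tally all fall into place routinely.
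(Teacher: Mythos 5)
Your proposal is correct and follows essentially the same route as the paper: perturb by $\varepsilon t$, invoke Propositions~\ref{th:boundeps} and~\ref{th:boundR} for the loop thresholds, split $u$ into the SDP residual $f_\varepsilon - v_k^T\tilde G v_k$ and the Cholesky residual $v_k^T(\tilde G - LL^T)v_k$, absorb via the two identities, and chain the bit-size bounds. The one step you flag as the remaining obstacle is exactly what the paper settles by Cauchy--Schwarz, namely $\sum_{\alpha+\beta=\gamma}(\tilde G_{\alpha\alpha}\tilde G_{\beta\beta})^{1/2}\le \trace\tilde G\le \sqrt{r}\,R$, so that $\delta_c$ of order $N+\log_2 R+\bigo{(\log_2 r)}\le \tau d^{\bigo{(n)}}$ suffices to make $\|u\|_\infty\le\varepsilon/r$ (your $r^2$ counting is an overcount of the paper's $r$, but harmless at this order).
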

\begin{proof}
  Let us first consider the loop of Algorithm~$\intsos$ defined from
  line~\lineref{line:epsi} to line~\lineref{line:epsf}.  From
  Proposition~\ref{th:boundeps}, this loop terminates when
  $f_\varepsilon \in \mathring{\Sigma}[X]$ for
  $\varepsilon = \frac{1}{2^N}$ and $N \leq \tau d^{\bigo{(n)}}$.
 
  When calling the $\sdp$ function at line~\lineref{line:sdp} to solve
  SDP~\eqref{eq:dualsdp} with precision parameters $\delta$ and $R$,
  we compute an approximate Gram matrix $\tilde{G}$ of $f_\varepsilon$
  such that $\tilde{G} \succeq 2^{\delta} I $ and
  $\trace{(\tilde{G}^2)} \leq R^2$.  From
  Proposition~\ref{th:boundR}, this procedure succeeds for large
  enough values of $\delta$ and $R$ of bitisze upper bounded by
  $\tau d^{\bigo{(n)}}$. In this case, we obtain a positive rational
  approximation $\tilde{\lambda} \geq 2^{-\delta}$ of the smallest
  eigenvalue of $\tilde{G}$.

  Then the Cholesky's decomposition of $\tilde{G}$ is computed when
  calling the $\cholesky$ function at line~\lineref{line:chol}.  The
  decomposition is guaranteed to succeed by selecting a large enough
  $\delta_c$ such that~\eqref{eq:chol2} holds. Let $r$ be the size of
  $\tilde{G}$ and $\delta_c$ be the smallest integer such that
  $2^{-\delta_c} < \frac{2^{-\delta}}{r^2 + r + (r-1) 2^{-\delta}}$.
  Since the function $x \mapsto \frac{x}{r^2 + r + (r-1) x}$ is
  increasing on $[0,\infty)$ and
  $\tilde{\lambda} \geq 2^{-\delta}$,~\eqref{eq:chol2} holds. We
  obtain an approximate weighted SOS decomposition
  $\sum_{i=1}^r s_i^2$ of $f_\varepsilon$ with rational coefficients.

  Let us now consider the remainder polynomial
  $u = f_\varepsilon - \sum_{i=1}^r s_i^2$. The second loop of
  Algorithm~$\intsos$ defined from line~\lineref{line:deltai} to
  line~\lineref{line:deltaf} terminates when for all $\alpha \in P/2$,
  $\varepsilon_\alpha \geq 0$.  This condition is fulfilled when for
  all $\alpha \in P/2$,
  $\varepsilon- \sum_{\beta \in P/2} |u_{\alpha + \beta}|/2 + u_\alpha
  \geq 0$.
  This latter condition holds when for all
  $\gamma \in \spt{u}$, $|u_\gamma| \leq \frac{\varepsilon}{r}$.

  Next, we show that this happens when the precisions $\delta$ of 
  $\sdp$ and $\delta_c$ of $\cholesky$ are both large enough.  From the definition
  of $u$, one has for all $\gamma \in \spt{u}$,
  $u_\gamma = f_\gamma - \varepsilon_{\gamma} - (\sum_{i=1}^r
  s_i^2)_\gamma$,
  where $\varepsilon_\gamma = \varepsilon$ when $\gamma \in (2\N)^n$
  and $\varepsilon_\gamma = 0$ otherwise.  The positive definite
  matrix $\tilde{G}$ computed by the SDP solver is an approximation of
  an exact Gram matrix of $f_\varepsilon$.  At precision $\delta$, one
  has for all $\gamma \in \spt{f}$,
  $\tilde{G} \succeq 2^{- \delta} I $ such that
\[
|f_\gamma - \varepsilon_\gamma - \trace{( \tilde{G} B_\gamma)} | = |f_\gamma - \varepsilon_\gamma - \sum_{\alpha + \beta = \gamma}
  \tilde{G}_{\alpha, \beta}|
  \leq 2^{-\delta}
 \,. \]
%

  In addition, it follows from~\eqref{eq:chol1} that the approximated
  Cholesky decomposition $L L^T$ of $\tilde{G}$ performed at precision
  $\delta$ satisfies $L L^T = \tilde{G} + E$ with
  $|E_{\alpha,\beta} | \leq \frac{(r+1)2^{-\delta_c}}{1 -
    (r+1)2^{-\delta_c}} |\tilde{G}_{\alpha,\alpha} \,
  \tilde{G}_{\beta,\beta}|^{\frac{1}{2}}$,
  for all $\alpha,\beta \in P/2$.
Moreover, by using Cauchy-Schwartz inequality, one has
\[\sum_{\alpha \in P/2} \tilde{G}_{\alpha,\alpha} = \trace{\tilde{G}}
\leq \sqrt{\trace{I}} \sqrt{\trace{\tilde{G}^2}} \leq \sqrt{r} R \,.\]
For all $\gamma \in \spt{u}$, this yields
\[
\bigl \lvert \sum_{\alpha + \beta = \gamma} \tilde{G}_{\alpha,\alpha}
\, \tilde{G}_{\beta,\beta}\bigr \rvert^{\frac{1}{2}}\leq \sum_{\alpha
  + \beta = \gamma} \frac{ \tilde{G}_{\alpha,\alpha} +
  \tilde{G}_{\beta,\beta} }{2} \leq \trace{\tilde{G}} \leq \sqrt{r} R \,,
\]
the first inequality coming again from Cauchy-Schwartz
inequality.

Thus, for all $\gamma \in \spt{u}$, one has
\[\bigl \lvert \sum_{\alpha + \beta = \gamma} \tilde{G}_{\alpha, \beta}
- (\sum_{i=1}^r s_i^2)_\gamma \bigr \rvert = \bigl \lvert \sum_{\alpha
  + \beta = \gamma} \tilde{G}_{\alpha, \beta} - \sum_{\alpha + \beta =
  \gamma} (L L^T)_{\alpha, \beta} \bigr \rvert = \bigl \lvert
\sum_{\alpha + \beta = \gamma} E_{\alpha,\beta} \bigr \rvert \,, \]
which is bounded by
\[ \frac{(r+1)2^{-\delta_c}}{1 - (r+1)2^{-\delta_c}} \sum_{\alpha +
  \beta = \gamma} |\tilde{G}_{\alpha,\alpha} \,
\tilde{G}_{\beta,\beta}|^{\frac{1}{2}} \leq
\frac{\sqrt{r}(r+1)2^{-\delta_c} \, R}{1 - (r+1)2^{-\delta_c}} \,.\]
Now, let us take the smallest $\delta$ such that
$2^{-\delta} \leq \frac{\varepsilon}{2 r} = \frac{1}{2^{N+1} r}$ as
well as the smallest $\delta_c$ such that
\[
\frac{\sqrt{r}(r+1)2^{-\delta_c} \, R}{1 - (r+1)2^{-\delta_c}} \leq
\frac{\varepsilon}{2 r}
\,, \]
that is $\delta = \lceil N + 1 + \log_2 r \rceil$ and
$\delta_c = \lceil \log_2 R + \log_2 (r+1) + \log_2 (2^{N+1} r \sqrt{r} + 1)
\rceil$.

From the previous inequalities, for all
$\gamma \in \spt{u}$, it holds that
\[|u_\gamma| = | f_\gamma - \varepsilon_{\gamma} - (\sum_{i=1}^r
s_i^2)_\gamma| \leq |f_\gamma - \varepsilon_\gamma - \sum_{\alpha +
  \beta = \gamma} \tilde{G}_{\alpha, \beta}| + |\sum_{\alpha + \beta =
  \gamma} \tilde{G}_{\alpha, \beta} - (\sum_{i=1}^r s_i^2)_\gamma |
\leq \frac{\varepsilon}{2 r} + \frac{\varepsilon}{2 r} =
\frac{\varepsilon}{r} \,.
\]
This ensures that Algorithm~$\intsos$ terminates. 

Let us note
\[\Delta(u) := \{(\alpha, \beta) : \alpha + \beta \in \spt{u} \,,
\alpha,\beta \in P/2\,, \alpha \neq \beta \} \,. \]
When terminating, the first output $\clist$ of Algorithm~$\intsos$ is a 
list of non-negative rational numbers containing the list
$[1, \dots, 1]$ of length $r$, the list
$\bigl\{\frac{|u_{\alpha + \beta}|}{2} : (\alpha, \beta) \in \Delta(u)
\bigr\}$
and the list $\{ \varepsilon_\alpha : \alpha \in \frac{P}{2} \}$.  The
second output $\slist$ of Algorithm~$\intsos$ is a list of monomials
containing the list $[s_1, \dots, s_r]$, the list
$\{ X^\alpha + \sgn{(u_{\alpha+\beta})} X^\beta : (\alpha, \beta) \in
\Delta(u) \}$
and the list $\{ X^\alpha : \alpha \in P/2 \}$.  From the output, we
obtain the following weigthed SOS decomposition
\[f = \sum_{i=1}^r s_i^2 + \sum_{\mbox{\tiny
    $(\alpha, \beta) \in \Delta(u)$ }} \dfrac{|u_{\alpha + \beta}|}{2}
(X^\alpha + \sgn{(u_{\alpha+\beta})} X^\beta)^2 + \sum_{\mbox{\tiny
    $\alpha \in \frac{P}{2}$ }} \varepsilon_\alpha X^{2 \alpha} \,.
 \]

Now, we bound the bit size of the coefficients. 
Since $r \leq \binom{n+k}{n} \leq d^n$ and
$N \leq \tau d^{\bigo{(n)}}$, one has $\delta \leq  \tau d^{\bigo{(n)}}$. Similarly, $R, \delta_c \leq \tau d^{\bigo{(n)}}$.
%
This bounds also the maximal bit size of the coefficients involved in
the approximate decomposition $\sum_{i=1}^r s_i^2$ as well the
coefficients of $u$. In the worst case, the coefficient
$\varepsilon_\alpha$ involved in the exact SOS decomposition is equal
to
$\varepsilon- \sum_{\beta \in P/2} |u_{\alpha + \beta}|/2 + u_\alpha$
for some $\alpha \in P/2$. Using again that the cardinal $r$ of $P/2$ is less than $\binom{n+k}{n} \leq d^n$, we obtain a
maximum bit size upper bounded by $\tau d^{\bigo{(n)}} $.
%
\end{proof}
\subsection{Bit complexity analysis}
\label{sec:bitop}
\begin{theorem}
\label{th:costmultivsos}
For $f$ as above, there exist $\varepsilon$, $\delta$, $R$, $\delta_c$ of bit sizes $\leq \tau d^{\bigo{(n)}}$ such that $\intsosfun{f}{\varepsilon}{\delta}{R}{\delta_c}$ runs in
boolean time $\tau^2 d^{\bigo{(n)}}$.
\end{theorem}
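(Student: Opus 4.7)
The plan is to fix $\varepsilon$, $\delta$, $R$, $\delta_c$ as prescribed by Proposition~\ref{th:boundeps}, Proposition~\ref{th:boundR} and the choices made in the proof of Proposition~\ref{th:bitmultivsos}, so that each of the two \textbf{while} loops of Algorithm~$\intsos$ terminates after a single iteration. Concretely, I take $\varepsilon = 2^{-N}$ with $N = \tau d^{\bigo{(n)}}$, then $\delta = \lceil N+1+\log_2 r\rceil$, $R$ as the rational upper bound furnished by Proposition~\ref{th:boundR} and $\delta_c = \lceil\log_2 R+\log_2(r+1)+\log_2(2^{N+1}r\sqrt{r}+1)\rceil$, all of bit size $\tau d^{\bigo{(n)}}$. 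Under these choices the correctness proof of Proposition~\ref{th:bitmultivsos} shows that $f_\varepsilon\in\mathring{\Sigma}[X]$ and that the test on line~\lineref{line:sosok} succeeds immediately. It therefore suffices to add up the boolean cost of each elementary step executed exactly once.

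The next step is to bound the cost of each of the five non-trivial operations, using the fact that the cage $P$ has cardinality $\leq \binom{n+d}{n}\leq d^n$ and that the size $r$ of the Gram matrix satisfies $r\leq\binom{n+k}{n}\leq d^n$. First, the Newton polytope $P$ is computed by quickhull~\cite{Barber96} on at most $d^n$ exponent vectors in $\N^n$, which costs $d^{\bigo{(n)}}$ boolean operations, since the coordinates have bit size $\bigo{(\log d)}$. Second, the $\sdp$ call at line~\lineref{line:sdp} is performed with the ellipsoid algorithm~\cite{GroetschelLovaszSchrijver93} on an SDP instance with $r\leq d^n$ variables, $\leq d^n$ linear constraints, precision $2^{-\delta}$ and radius $R$, all of bit size $\tau d^{\bigo{(n)}}$; its boolean complexity is polynomial in these parameters and therefore of order $\tau^{\bigo{(1)}}d^{\bigo{(n)}}$. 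Third, the Cholesky factorization at line~\lineref{line:chol} performs $\bigo{(r^3)}=d^{\bigo{(n)}}$ arithmetic operations on rationals of bit size $\tau d^{\bigo{(n)}}$; using schoolbook multiplication each such operation costs $\bigo{((\tau d^{\bigo{(n)}})^2)}=\tau^2 d^{\bigo{(n)}}$ bit operations, giving a total of $\tau^2 d^{\bigo{(n)}}$. Fourth, computing the remainder $u = f_\varepsilon - \sum_{i=1}^r s_i^2$ amounts to expanding and summing $\bigo{(r^2)}$ products of monomials and aggregating at most $d^{\bigo{(n)}}$ coefficients of bit size $\tau d^{\bigo{(n)}}$, again within $\tau^2 d^{\bigo{(n)}}$ bit operations. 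Finally, $\absorb$ iterates over $|\spt{u}|\leq d^{\bigo{(n)}}$ monomials, each time performing a constant number of additions, divisions by $2$ and sign tests on rationals of bit size $\tau d^{\bigo{(n)}}$, for a total cost of $\tau d^{\bigo{(n)}}$ bit operations.

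Summing the five contributions yields a global boolean cost of $\tau^2 d^{\bigo{(n)}}$, proving the claim. The main obstacle in carrying out this plan cleanly is the $\sdp$ step: one has to check that the boolean (and not only arithmetic) complexity of the ellipsoid method on the dual SDP~\eqref{eq:dualsdp} with the specific precision and radius prescribed above is absorbed by the $\tau^2 d^{\bigo{(n)}}$ budget. Once this is granted, and once it is observed that the parameters fixed above keep the bit sizes of all intermediate rationals within $\tau d^{\bigo{(n)}}$ (as already established in Proposition~\ref{th:bitmultivsos}), the remaining accounting is a routine application of standard bit-complexity estimates for dense polynomial and matrix arithmetic.
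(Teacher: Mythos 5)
Your proposal is correct and follows essentially the same route as the paper: fix $\varepsilon,\delta,R,\delta_c$ as in Proposition~\ref{th:bitmultivsos} so that both while loops execute a single iteration, then sum the boolean costs of quickhull, the ellipsoid-based $\sdp$ call, Cholesky's decomposition and the symbolic steps, with the SDP call dominating. The one point you flag but leave unverified --- that the ellipsoid cost is $\tau^2 d^{\bigo{(n)}}$ and not merely $\tau^{\bigo{(1)}}d^{\bigo{(n)}}$ --- is settled in the paper by counting $\bigo{(\nsdp^4 \log_2(2^\tau \nsdp\, R\, 2^\delta))}$ iterations, each costing $\bigo{(\nsdp^2(\msdp+\nsdp))}$ arithmetic operations on $\log_2(2^\tau \nsdp\, R\, 2^\delta)$-bit numbers, so that exactly two factors of $\log_2(2^\tau \nsdp\, R\, 2^\delta) \leq \tau d^{\bigo{(n)}}$ appear.
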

\begin{proof}
  We consider $\varepsilon$, $\delta$, $R$ and $\delta_c$ as in the
  proof of Proposition~\ref{th:bitmultivsos}, so that
  Algorithm~$\intsos$ only performs a single iteration within the two
  while loops before terminating. Thus, the bit size of each input
  parameter is upper bounded by $\tau d^{\bigo{(n)}}$.

Computing $\polytope(f)$ with 
the  quickhull algorithm runs in boolean time $\bigo{(V^2)}$  for
  a polytope with $V$ vertices. In our case $V \leq \binom{n+d}{n} \leq 2 d^n$,
  so that this procedure runs in boolean time
 $\bigo{(d^{2n})}$.
  Next, we investigate the computational cost of the call to $\sdp$ at
  line~\lineref{line:sdp}.  Let us note $\nsdp = r$ (resp.~$\msdp$)
  the size (resp.~number of entries) of $\tilde{G}$.  This step
  consists of solving SDP~\eqref{eq:dualsdp}, which is performed in
  $\bigo{( \nsdp^4 \log_2 (2^\tau \nsdp \, R \, 2^\delta) )}$ iterations
  of the ellipsoid method, where each iteration requires
  $\bigo{(\nsdp^2(\msdp+\nsdp) )}$ arithmetic operations over
  $\log_2 (2^\tau \nsdp \, R \, 2^\delta)$-bit numbers (see e.g.~\cite{GroetschelLovaszSchrijver93}). Since
  $\msdp, \nsdp \leq \binom{n+d}{n} \leq 2 d^n$, one has
  $\log_2 (2^\tau \nsdp \, R \, 2^\delta) \leq \tau d^{\bigo{(n)}}$,
  $\nsdp^2(\msdp+\nsdp) \leq \bigo{(\tau d^{3 n})}$ and
  $\nsdp^4 \log_2(2^\tau \nsdp \, R \, 2^\delta) \leq \tau
    d^{\bigo{(n)}}$.
  Overall, the ellipsoid algorithm runs in boolean time
  $\tau^2 d^{\bigo{(n)}}$ to compute the approximate Gram matrix
  $\tilde{G}$.
  We end with the cost of the call to $\cholesky$ at
  line~\lineref{line:chol}. Cholesky's decomposition is performed in
  $\bigo{(\nsdp^3)}$ arithmetic operations over $\delta_c$-bit
  numbers. Since $\delta_c \leq \tau d^{\bigo{(n)}}$, the function
  runs in boolean time $\tau d^{\bigo{(n)}}$.  The other elementary
  arithmetic operations performed while running Algorithm~$\intsos$
  have a negligable cost w.r.t.~to the $\sdp$ procedure. 
\end{proof}

\section{Exact Polya's representations}
\label{sec:polya}
Next, we show how to apply Algorithm~$\intsos$ to decompose positive definite forms into SOS of rational functions. 

Let $G_n := \sum_{i=1}^n X_i^2$ and
$\S^{n-1} := \{\bmx \in \R^n : G_n(\bmx) = 1 \}$ be the unit
$(n-1)$-sphere.
A positive definite form $f\in \R[X]$ is a homogeneous polynomial
which is positive over $\S^{n-1}$. For such a form, we set
\[
\varepsilon(f) := \frac{\min_{\bmx \in \S^{n-1}} f(\bmx)}{\max_{\bmx
    \in \S^{n-1}} f(\bmx)}
    \,,\]
which measures how close $f$ is to having a zero in
$\S^{n-1}$. While there is no guarantee that $f \in \Sigma[X]$, Reznick proved in~\cite{Reznick95} that for large enough $D \in \N$, $f G_n^D \in \Sigma[X]$. The proof being based on prior work by Polya~\cite{Polya}, such SOS decompositions are called {\em Polya's representations} and $D$ is called the {Polya's degree}.
%
%
Our next result states that for large enough $D \in \N$, $f G_n^D \in \mathring{\Sigma}[X]$.
\begin{lemma}
\label{th:polyaint}
Let $f$ be a positive definite form of degree $d$ in $\Z[X]$ and
$D \geq \frac{n d (d-1)}{4 \log 2 \, \varepsilon(f)} - \frac{n +
  d}{2}$.  Then $f \, G_n^{D+1} \in \mathring{\Sigma}[X]$.
\end{lemma}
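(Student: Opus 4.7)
The plan is to express $f G_n^{D+1}$ as a sum $h_1 + h_2$ where $h_1 \in \Sigma[X]$ comes from Reznick's quantitative bound applied to a small perturbation $\tilde{f}_\epsilon$ of $f$, and $h_2$ is manifestly in the interior of the SOS cone; the corresponding Gram matrices in a common monomial basis will sum to a positive definite matrix, and Theorem~\ref{th:intsospdGram} will finish the argument.

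First I would fix the right monomial basis. Since $f$ is a positive definite form, $d = \deg f$ must be even; set $d = 2k$. The evaluation $f(e_i)$ at the standard basis vector $e_i$ equals the coefficient of $X_i^d$ in $f$, which must be positive, so every vertex $d\,e_i$ of the degree-$d$ simplex face lies in $\spt{f}$. The same argument applied to the positive definite form $f G_n^{D+1}$ shows that its Newton polytope $\polytope{(f G_n^{D+1})}$ is the full simplex face of degree $2M$, with $M := D+1+k$. Consequently $\polytope{(f G_n^{D+1})}/2$ consists of the full set $v_M$ of all monomials of degree $M$, and by Theorem~\ref{th:intsospdGram} it suffices to produce a positive definite Gram matrix $G$ such that $f G_n^{D+1} = v_M^T G v_M$.

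Next I would perturb. Set $\tilde f_\epsilon := f - \epsilon G_n^k$. Since $G_n^k \equiv 1$ on $\S^{n-1}$, the form $\tilde f_\epsilon$ is positive definite for every $0 < \epsilon < \min_{\S^{n-1}} f$, and $\varepsilon(\tilde f_\epsilon) = (\min_{\S^{n-1}} f - \epsilon)/(\max_{\S^{n-1}} f - \epsilon) \to \varepsilon(f)$ as $\epsilon \to 0^+$. Since by hypothesis $D \geq \frac{n d(d-1)}{4 \log 2 \, \varepsilon(f)} - \frac{n+d}{2}$, continuity yields $D+1 \geq \frac{n d(d-1)}{4 \log 2 \, \varepsilon(\tilde f_\epsilon)} - \frac{n+d}{2}$ for all sufficiently small $\epsilon > 0$. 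Reznick's theorem from~\cite{Reznick95} (the quantitative bound underlying the statement of the lemma itself) then gives $\tilde f_\epsilon G_n^{D+1} \in \Sigma[X]$. As $\tilde f_\epsilon$ is also positive definite, the Newton polytope argument of the previous paragraph shows $\polytope{(\tilde f_\epsilon G_n^{D+1})}/2 = v_M$, and Theorem~\ref{th:np} forces every SOS decomposition of $\tilde f_\epsilon G_n^{D+1}$ to use only monomials from $v_M$; this yields a PSD Gram matrix $H_1$ with $\tilde f_\epsilon G_n^{D+1} = v_M^T H_1 v_M$.

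To conclude, the multinomial identity $G_n^M = \sum_{|\alpha|=M} \binom{M}{\alpha} X^{2\alpha}$ gives $G_n^M = v_M^T H_2 v_M$ with $H_2 := \mathrm{diag}\bigl(\binom{M}{\alpha}\bigr)_{|\alpha|=M} \succ 0$, hence
\[ f G_n^{D+1} = \tilde f_\epsilon G_n^{D+1} + \epsilon G_n^M = v_M^T (H_1 + \epsilon H_2) v_M, \]
with $H_1 + \epsilon H_2 \succ 0$ as the sum of a positive semidefinite and a positive definite matrix. Theorem~\ref{th:intsospdGram} then delivers $f G_n^{D+1} \in \mathring{\Sigma}[X]$. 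The main obstacle is the Newton polytope bookkeeping: one must verify that $v_M$ is exactly the basis $\polytope{(f G_n^{D+1})}/2$ dictated by Theorem~\ref{th:intsospdGram}, which is precisely where the observation that positive definite forms carry nonzero diagonal monomials $X_i^d$ becomes essential, since otherwise a positive definite Gram matrix in the overcomplete basis $v_M$ would be incompatible with an SOS decomposition supported on a smaller half-polytope.
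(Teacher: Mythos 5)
Your proof is correct and follows essentially the same route as the paper's: perturb $f$ by a small positive definite form, apply Reznick's quantitative bound to the perturbation (using that $\varepsilon(\cdot)$ degrades only slightly under the perturbation), and recover interiority of $f\,G_n^{D+1}$ from the leftover term via a positive definite Gram matrix. The differences are cosmetic: you perturb by $\epsilon\, G_n^{k}$ rather than by $\epsilon\sum_{\alpha\in\polytope(f)/2}X^{2\alpha}$ (which makes $\varepsilon(\tilde f_\epsilon)$ exactly computable and the leftover Gram contribution an explicit diagonal of multinomial coefficients, so your Newton-polytope bookkeeping is if anything more careful than the paper's), and you invoke qualitative continuity of $\epsilon\mapsto\varepsilon(\tilde f_\epsilon)$ where the paper derives the quantitative estimate $\varepsilon(f-e\,t)\geq\varepsilon(f)\frac{N-1}{N}$; both suffice because the hypothesis leaves a slack of $1$ in the Polya degree.
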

\begin{proof}
  Let $P := \polytope{(f)}$ and
  $t := \sum_{\alpha \in P/2} X^{2 \alpha}$. Since $f$ is a form, then
  each term $X^{2 \alpha}$ has degree $d$, for all $\alpha \in P/2$,
  thus $t$ is a form.  First, we show that for any positive 
  $e < \frac{\min_{\bmx \in \S^{n-1}} f(\bmx)}{\max_{\bmx \in
      \S^{n-1}} t(\bmx)}$,
  the form $(f - e t)$ is positive definite: for any nonzero
  $\bmx \in \R^n$, one has
\[ f(\bmx) - e t(\bmx) = G_n(\bmx)^d [
  f\bigl(\frac{\bmx}{G_n(\bmx)}\bigr) - e
  t\bigl(\frac{\bmx}{G_n(\bmx)}\bigr)] > 0
  \]
  since $(f - e t)$ is positive on $\S^{n-1}$.  Next,
  \cite[Theorem~3.12]{Reznick95} implies that for any positive integer $D_e$ such that
  \[D_e \geq \underline{D_e} := \frac{n d (d-1)}{4 \log 2 \,
    \varepsilon(f - e t)} - \frac{n + d}{2}
 \,, \]        
  one has $(f - e t) \, G_n ^{D_e} \in \Sigma[X]$. As in the proof of
  Proposition~\ref{th:boundeps}, this yields 
  $f \, G_n ^{D_e} \in \mathring{\Sigma}[X]$. 
  
Next, with 
  $\underline{D} = \frac{n d (d-1)}{4 \log 2 \, \varepsilon(f)} -
  \frac{n + d}{2}$, 
  we prove that there exists $N \in \N$ such that for
  $e = \frac{\min_{\bmx \in \S^{n-1}} f(\bmx)}{N \max_{\bmx \in
      \S^{n-1}} t(\bmx)}$,
  $\underline{D_e} \leq \underline{D}+1$.  Since
  $f \, G_n^{D_e} \in \mathring{\Sigma}[X]$ for all
  $D_e \geq \underline{D_e}$, this will yield the desired result. For
  any $\bmx \in \S^{n-1}$, one has
\[
\min_{\bmx \in \S^{n-1}} f(\bmx) - e \max_{\bmx \in \S^{n-1}}
  t(\bmx) \leq f(\bmx) - e t(\bmx) \leq \max_{\bmx \in \S^{n-1}}
  f(\bmx) \,.
  \]
  Hence we obtain the following:
\[
\varepsilon(f - e t) \geq \frac{\min_{\bmx \in \S^{n-1}} f(\bmx) -
    e \max_{\bmx \in \S^{n-1}} t(\bmx)}{\max_{\bmx \in \S^{n-1}}
    f(\bmx)} = \varepsilon(f) \frac{N-1}{N} \,.
    \]
  Therefore, one has
  $\underline{D_e} \leq \frac{N}{N-1} \frac{n d (d-1)}{4 \log 2 \,
    \varepsilon(f)} - \frac{n + d}{2}$,
  yielding
  $\underline{D_e} - \underline{D} \leq \frac{1}{N-1} \frac{n d
    (d-1)}{4 \log 2 \, \varepsilon(f)}$.
  By choosing
  $N := \lfloor \frac{n d (d-1)}{4 \log 2 \, \varepsilon(f)} -1
  \rfloor$,
  one ensures that $\underline{D_e} - \underline{D} \leq 1$, which
  concludes the proof.
\end{proof}



%
Algorithm~$\polyasos$ takes as input $f \in \Z[X]$, finds the smallest $D \in \N$ such that $f \, G_n^D \in \mathring{\Sigma}[X]$, thanks to an oracle as in~$\intsos$. Then, $\intsos$ is applied on $f \, G_n^D$. 

{\footnotesize
\begin{algorithm}
\caption{$\polyasos$
}
\label{alg:polyasos}
\begin{algorithmic}[1]
\Require $f \in \Z[X]$, positive $\varepsilon \in \Q$, precision parameters $\delta, R \in \N$ for the SDP solver,  precision $\delta_c \in \N$ for the Cholesky's decomposition
\Ensure list $\clist$ of numbers in $\Q$ and list $\slist$ of polynomials in $\Q[X]$
\State $D := 0$
\While {$ f \, G_n^D \notin \mathring{\Sigma}[X]$} \label{line:Di} $D \gets D +1$ 
\EndWhile \label{line:Df}
\State \Return $\intsosfun{f \, G_n^D}{\varepsilon}{\delta}{R}{\delta_c}$
\end{algorithmic}
\end{algorithm}
}

\begin{example}
\label{ex:polya}
Let us apply $\polyasos$ on the perturbed Motzkin polynomial $f = (1+2^{-20}) (X_3^6 +  X_1^4 X_2^2 + X_1^2 X_2^4) - 3 X_1^2 X_2^2 X_3^2$. With $D = 1$, one has $f \, G_n = (X_1^2 + X_2^2 + X_3^2) \, f \in \mathring{\Sigma}[X]$ and $\intsos$ yields an SOS decomposition of $f \, G_n$ with $\varepsilon = 2^{-20}$, $\delta = R = 60$, $\delta_c = 10$.
\end{example}
\begin{theorem}
\label{th:polyasos}
Let $f \in \Z[X]$ be a positive definite form of degree $d$,
coefficients of bit size at most $\tau$. On input $f$,
Algorithm~$\polyasos$ terminates and outputs a weighted SOS
decomposition for $f$. The maximum bit size of its coefficients
involved and the boolean running time of the
procedure are both upper bounded by $2^{\tau d^{\bigo{(n)}}}$.
\end{theorem}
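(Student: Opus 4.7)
The plan is to split the proof into two parts: first, to bound the Polya degree $D$ selected by the loop at lines~\lineref{line:Di}--\lineref{line:Df}; second, to apply the complexity estimates of $\intsos$ from Proposition~\ref{th:bitmultivsos} and Theorem~\ref{th:costmultivsos} to the enlarged input $f\,G_n^{D}$.

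For the first part, I start from Lemma~\ref{th:polyaint}, which guarantees $f\,G_n^{D+1}\in\mathring{\Sigma}[X]$ as soon as $D \geq \frac{nd(d-1)}{4\log 2\,\varepsilon(f)} - \frac{n+d}{2}$. The key step is to control $\varepsilon(f)$ from below in terms of $\tau$, $d$ and $n$. The denominator $\max_{\bmx\in\S^{n-1}}f(\bmx)$ is trivially bounded above by $2^\tau\binom{n+d}{d} \leq 2^\tau d^{n}$. For the numerator, I would use the effective real algebraic geometry estimate of \cite[Thm~14.16]{BPR06}, exactly as already invoked in Propositions~\ref{th:boundeps} and~\ref{th:boundR}: the set $\{e\in\R : \forall\bmx\in\R^n,\ f(\bmx)-e\,G_n(\bmx)^{d/2}\geq 0\}$ is semi-algebraic, defined by univariate polynomials of degree $d^{\bigo{(n)}}$ with coefficients of bit size $\tau d^{\bigo{(n)}}$, so the minimum positive real root (which lower-bounds $\min_{\bmx\in\S^{n-1}}f(\bmx)$) has bit size at most $\tau d^{\bigo{(n)}}$. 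Combining these gives $\varepsilon(f) \geq 2^{-\tau d^{\bigo{(n)}}}$, and therefore the smallest $D$ returned by the loop satisfies $D \leq 2^{\tau d^{\bigo{(n)}}}$.

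The second part is a parameter substitution. I set $\tilde{f} := f\,G_n^{D}$ and record that its degree is $\tilde{d} := d + 2D \leq 2^{\tau d^{\bigo{(n)}}}$, while its coefficients satisfy $\|\tilde{f}\|_\infty \leq \|f\|_\infty\,n^{D}$, so $\tilde{\tau} := \tau(\tilde{f}) \leq \tau + D\log_2 n \leq 2^{\tau d^{\bigo{(n)}}}$. By the exit condition of the loop, $\tilde{f}\in\mathring{\Sigma}_\Z[X]$, so the hypotheses of Proposition~\ref{th:bitmultivsos} and Theorem~\ref{th:costmultivsos} are met for $\tilde{f}$. Plugging in the new parameters, the bit size of the output coefficients is bounded by $\tilde{\tau}\,\tilde{d}^{\,\bigo{(n)}} \leq (2^{\tau d^{\bigo{(n)}}})^{\bigo{(n)}} = 2^{\tau d^{\bigo{(n)}}}$, and the boolean runtime of the $\intsos$ call is bounded by $\tilde{\tau}^{\,2}\,\tilde{d}^{\,\bigo{(n)}} \leq 2^{\tau d^{\bigo{(n)}}}$.

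The main obstacle is the lower bound on $\varepsilon(f)$: without effective control on how close $f$ comes to vanishing on the unit sphere, Lemma~\ref{th:polyaint} alone yields no usable upper bound on $D$, and the whole doubly exponential estimate collapses. Once this lower bound is in hand, all remaining contributions --- the cost of the oracle loop searching for $D$ (at most $D$ tests, each of cost absorbed in $2^{\tau d^{\bigo{(n)}}}$), the computation of the product $f\,G_n^{D}$ (polynomial in its output size), and the final $\intsos$ call --- stay within the target bound $2^{\tau d^{\bigo{(n)}}}$, yielding the theorem.
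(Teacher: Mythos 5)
Your proposal is correct and follows essentially the same route as the paper's proof: termination via Lemma~\ref{th:polyaint}, a lower bound on $\varepsilon(f)$ of the form $2^{-\tau d^{\bigo{(n)}}}$ obtained from \cite[Theorem~14.16]{BPR06}, the bound $\tau(f\,G_n^D)\leq \tau + D\log_2 n$ via the multinomial coefficients of $G_n^D$, and substitution of the inflated degree and bit size into Proposition~\ref{th:bitmultivsos} and Theorem~\ref{th:costmultivsos}. The only cosmetic difference is that you bound $\max_{\bmx\in\S^{n-1}}f(\bmx)$ by the trivial estimate $2^\tau d^n$ rather than invoking the same real-algebraic-geometry bound a second time, which is immaterial.
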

%
\begin{proof}
  By Lemma~\ref{th:polyaint}, the while loop from
  line~\lineref{line:Di} to ~\lineref{line:Df} is ensured to terminate
  for a positive integer
  $D \geq \frac{n d (d-1)}{4 \log 2 \, \varepsilon(f)} - \frac{n +
    d}{2} + 1$.
  By Proposition~\ref{th:bitmultivsos}, when applying $\intsos$ to
  $f \, G_n^D$, the procedure always terminates. The outputs are a
  list of non-negative rational numbers $[c_1,\dots,c_r]$ and a list
  of rational polynomials $[s_1,\dots,s_r]$ providing the weighted SOS
  decompositon $ f \, G_n^D = \sum_{i=1}^r c_i s_i^2$. Thus, we obtain
  $ f = \sum_{i=1}^r c_i \frac{s_i^2}{G_n^D}$, yielding the first
  claim.

  Since,
  $(X_1^2 + \dots + X_n^2)^D = \sum_{|\alpha| = D} \frac{D!}{\alpha_1!
    \cdots \alpha_n!} \, X^{2 \alpha}$,
  each coefficient of $G_n^D$ is upper bounded by
  $\sum_{|\alpha| = D} \frac{D!}{\alpha_1! \cdots \alpha_n!} = n^D$.
  Thus $\tau(f \, G_n^D) \leq \tau + D \log n$.  Using again
  Proposition~\ref{th:bitmultivsos}, the maximum bit size of the
  coefficients involved in the weighted SOS decomposition of
  $f \, G_n^D$ is upper bounded by
  $(\tau + D \log n) (d + 2D )^{\bigo{(n)}}$. Now, we derive an upper
  bound of $D$. Since $f$ is a positive form of degree $d$, one has
  \[\min_{\bmx \in \S^{n-1}} f(\bmx) = \max \{e : \forall \bmx \in
  \R^n, f(\bmx) - e G_n(\bmx)^d \geq 0 \}
 \,. \]
  Again, we rely on \cite[Theorem 14.16]{BPR06} to show that
  $\min_{\bmx \in \S^{n-1}} f(\bmx) \geq 2^{-\tau d^{\bigo{(n)}}}$.
  Similarly, we obtain
  $\max_{\bmx \in \S^{n-1}} f(\bmx) \leq 2^{\tau d^{\bigo{(n)}}}$ and
  thus $\frac{1}{\varepsilon(f)} \leq 2^{\tau d^{\bigo{(n)}}}$. We
  obtain
  $\frac{n d (d-1)}{4 \log 2 \, \varepsilon(f)} - \frac{n + d}{2} +1
  \leq 2^{ \tau d^{\bigo{(n)}}}$.
  This implies that
  $(\tau + D \log n) (d + 2 D)^{\bigo{(n)}} \leq 2^{ \tau d^{\bigo{(n)}}}$.
  From Theorem~\ref{th:costmultivsos}, the boolean running time is
  upper bounded by $(\tau + D \log n)^2 (d+2D)^{\bigo{(n)}}$, which
  ends the proof.
\end{proof}


\section{Exact Putinar's representations}
\label{sec:putinar}
We let $f, g_1,\dots, g_m$ in $\Z[X]$ of degree $\leq d$ and $\tau$ be
a bound on the bit size of their coefficients. Assume that $f$ is
positive over
$\K := \{\x \in \R^n : g_1(\x) \geq 0, \dots, g_m(\x) \geq 0 \}$ and
reaches its infimum with $f^\star := \min_{\x \in \K} f(\x) > 0$.  With
$f = \sum_{|\alpha| \leq d} f_\alpha \x^\alpha$, we set
$\|f\| := \max_{|\alpha| \leq d} \frac{f_\alpha \alpha_1!\cdots
  \alpha_n!}{|\alpha|!}$ and $g_0 := 1$. 
  
We consider the quadratic module
$\mathcal{Q}(\K) := \bigl\{ \sum_{j=0}^m \sigma_j g_j : \sigma_j \in
\Sigma[\x] \bigl\}$
and, for $D\in \N$, the $D$-truncated quadratic module
$\mathcal{Q}_D(\K) := \bigl\{ \sum_{j=0}^m \sigma_j g_j : \sigma_j \in
\Sigma[\x] \,, \ \deg(\sigma_j g_j) \leq D \bigl\}$
generated by $g_1,\dots, g_m$.  We say that $\mathcal{Q}(\K)$ is {\em
  archimedean} if $N - G_n \in \mathcal{Q}(\K)$ for some $N \in
\N$. We also assume in this section: 
\begin{assumption}
\label{hyp:arch}
The set $\K$ is a basic compact semi-algebraic set with nonempty
interior, included in $[-1, 1]^n$
and 
$\mathcal{Q}(\K)$ is archimedean.
\end{assumption}
%


Under Assumption~\ref{hyp:arch}, $f$ is positive over $\K$ only if
$f\in\mathcal{Q}_D(\K)$ for some $D \in 2 \N$
(see~\cite{Putinar1993positive}). In this case, there exists a {\em
  Putinar's representation} $f = \sum_{i=0}^m \sigma_j g_j$ with
$\sigma_j \in \Sigma[X]$ for $0\leq j\leq m$.
Let $w_j := \lceil \deg g_j / 2 \rceil$, for all
$1\leq j\leq m$.  

One can certify that $f \in \mathcal{Q}_D(\K)$ for $D = 2k$ by solving
the next SDP with
$k \geq \max \{\lceil d / 2 \rceil, w_1, \dots, w_m\}$:
\if{
\begin{equation}
\label{eq:primalsdp2}
\inf\limits_{\y} \quad  \sum_{|\alpha| \leq D} f_\alpha y_{\alpha}  \quad \text{s.t.}\quad 
 \begin{aligned}
 M_k(\y) \succeq 0 \,, \quad & \quad y_0 = 1  \,, \\
 M_{k-w_j}(g_j\, \y) \succeq 0 \,, & \quad j=1,\dots,m  \,. \\
\end{aligned}
\end{equation}
By writing $M_k(\y) = \sum_{\alpha} y_\alpha B_\alpha$ and
$M_{k-w_j}(g_j \, \y) = \sum_{\alpha } y_\alpha C_{j \alpha}$ for all
$j=1,\dots,m$ , the dual of SDP~\eqref{eq:primalsdp2} is also an SDP
given by:
}\fi
\begin{equation}
\label{eq:dualsdp2}
\begin{aligned}
  \inf\limits_{G_0, G_1,\dots,G_m \succeq 0} \quad  \trace{(G_0 \, B_0)} + \sum_{i=1}^m g_j(0) \trace{(G_j \, C_{j0})} \\
  \text{s.t.} \quad  \trace{(G_0 \, B_\gamma)} + \sum_{j=1}^m \trace{(G_j \, C_{j \gamma})} = f_\gamma \,, \quad \forall \gamma \in \N_D^n \,, \\
\end{aligned}
\end{equation}
where $B_{\gamma}$ is as for SDP~\eqref{eq:dualsdp} and $C_{j \gamma}$
has rows (resp.~ columns) indexed by $\N_{k-w_j}^{n}$ with
$(\alpha, \beta)$ entry equal to
$\sum_{\alpha+\beta+\delta = \gamma} g_{j \delta}$.
SDP~\eqref{eq:dualsdp2} is a reformulation of the problem
$\sup \{b : f - b \in \mathcal{Q}_D(\K) \}$, with optimal value
denoted by $f_D^\star$.  
Next result follows from \cite[Theorem~4.2]{Las01sos}.
\begin{theorem}
\label{th:lasoncstrained}
We use the notation and assumptions introduced above.  For
$D \in 2\N$ large enough, one has $0 < f_D^\star \leq f^\star$. In
addition, SDP~\eqref{eq:dualsdp2} has an optimal solution
$(G_0,G_1,\dots,G_m)$, yielding the following Putinar's
representation:
$f - f_D^\star = \sum_{i=1}^r \lambda_{i0} q_{i0}^2 + \sum_{i=1}^m g_j
\sum_{i=1}^{r_j} \lambda_{i j} q_{i j}^2 $
where the vectors of coefficients of the polynomials $q_{ij}$ are the
eigenvectors of $G_j$ with respective eigenvalues $\lambda_{ij}$, for
all $j=0,\dots,m$.
\end{theorem}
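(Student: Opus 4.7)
The plan is to establish the three assertions separately: the inequality chain $0 < f_D^\star \leq f^\star$, the attainment of the SDP optimum, and the resulting Putinar representation.

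First, I would show $f_D^\star \leq f^\star$ directly from the reformulation $f_D^\star = \sup\{b : f - b \in \mathcal{Q}_D(\K)\}$. Any feasible $b$ yields $f - b \in \mathcal{Q}_D(\K) \subseteq \mathcal{Q}(\K)$; by definition of $\K$, each polynomial of $\mathcal{Q}(\K)$ is nonnegative on $\K$, hence $b \leq \min_{\x \in \K} f(\x) = f^\star$. Taking the supremum gives $f_D^\star \leq f^\star$. The lower bound $f_D^\star > 0$ for $D$ large is where Putinar's Positivstellensatz enters. Under Assumption~\ref{hyp:arch}, the archimedean condition guarantees that every polynomial strictly positive on $\K$ belongs to $\mathcal{Q}(\K)$. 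Since $f \geq f^\star > f^\star/2$ on $\K$, the polynomial $f - f^\star/2$ is strictly positive on $\K$, so there exists $D_0 \in 2\N$ with $f - f^\star/2 \in \mathcal{Q}_{D_0}(\K)$. For any $D \geq D_0$, the value $b = f^\star/2$ is feasible, yielding $f_D^\star \geq f^\star/2 > 0$.

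Second, I would argue existence of an optimal solution $(G_0,G_1,\ldots,G_m)$ of SDP~\eqref{eq:dualsdp2}. The archimedean hypothesis provides $N - G_n \in \mathcal{Q}(\K)$, which translates (via the moment-SOS duality between SDP~\eqref{eq:dualsdp2} and its primal moment relaxation) into uniform bounds on the feasible matrices $G_j$: informally, the trace of each block can be controlled by $N$ and the moment bounds $|y_\alpha| \leq N^{|\alpha|/2}$ available on the primal side. Simultaneously, the nonempty interior of $\K$ produces a strictly feasible primal moment sequence (take the moments of a smooth probability density supported in the interior), so Slater's condition holds, strong duality applies, and the dual SDP optimum is attained on a bounded set.

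Third, I would convert this optimal tuple into the Putinar representation. The linear constraint $\trace(G_0 B_\gamma) + \sum_{j=1}^m \trace(G_j C_{j\gamma}) = f_\gamma$ for every $\gamma \in \N_D^n$ is exactly the coefficient-wise statement of the polynomial identity
\[
f - f_D^\star \,=\, v_k^T G_0\, v_k \,+\, \sum_{j=1}^m \bigl(v_{k-w_j}^T G_j\, v_{k-w_j}\bigr)\, g_j,
\]
since $v_k^T B_\gamma v_k = X^\gamma$ and $v_{k-w_j}^T C_{j\gamma} v_{k-w_j}$ is the coefficient of $X^\gamma$ in the product $(v_{k-w_j}^T G_j v_{k-w_j})\, g_j$. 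Because each $G_j \succeq 0$, a spectral decomposition writes $G_j = \sum_{i=1}^{r_j} \lambda_{ij}\, \q_{ij}\, \q_{ij}^T$ with $\lambda_{ij} \geq 0$; letting $q_{ij}(X) := \q_{ij}^T v_{k-w_j}$ (and $q_{i0}(X) := \q_{i0}^T v_k$) gives $v_{k-w_j}^T G_j v_{k-w_j} = \sum_{i=1}^{r_j} \lambda_{ij}\, q_{ij}^2$, which when substituted above yields the stated Putinar decomposition.

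The main obstacle I expect is the second step, namely rigorously justifying the attainment of the SDP optimum. Feasibility and weak duality are immediate, but concluding compactness of the dual feasible set (or attainment by another route) genuinely uses the archimedean hypothesis, and so the cleanest argument is likely to invoke the already-established moment-SOS duality results of~\cite{Las01sos} rather than reprove boundedness from scratch.
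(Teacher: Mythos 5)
Your proposal is correct, but note that the paper does not actually prove this statement: it is introduced with ``Next result follows from \cite[Theorem~4.2]{Las01sos}'' and the proof is delegated entirely to that citation. What you have written is essentially a self-contained reconstruction of Lasserre's argument. Your three steps are the right ones: weak duality plus Putinar's Positivstellensatz applied to the strictly positive polynomial $f - f^\star/2$ gives $0 < f^\star/2 \leq f_D^\star \leq f^\star$ for $D$ large; attainment of the optimum of SDP~\eqref{eq:dualsdp2} follows from strict feasibility of the primal moment relaxation (the moment and localizing matrices of a measure with density supported in the nonempty interior of $\K$, guaranteed by Assumption~\ref{hyp:arch}, are positive definite) together with finiteness of its value, via standard conic duality; and the spectral decomposition of the blocks $G_j \succeq 0$ turns the linear constraints into the polynomial identity. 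One small remark: your aside about bounding the traces of the $G_j$ via the archimedean condition is not needed and is the less clean route — as you yourself conclude, the Slater-type argument on the moment side is the one that actually delivers dual attainment, and the archimedean hypothesis is only genuinely used in the first step (membership of $f - f^\star/2$ in some $\mathcal{Q}_{D_0}(\K)$). With that understood, your write-up supplies the proof the paper omits.
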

The complexity of Putinar's Positivstellens\"atz was analyzed in~\cite{Nie07Putinar}:

\begin{theorem}
\label{th:putinar}
With the notation and assumptions introduced above,   there
exists a real $\chi_\K > 0$ depending on $\K$ such that

(i) for all even
$D \geq \chi_\K \exp \bigl(d^2 n^d \frac{\|f\|}{f^\star} \bigr)^{\chi_\K}$,
$f \in \mathcal{Q}_D(\K)$.

(ii) for all even $D \geq \chi_\K \exp \bigl(2 d^2 n^d \bigr)^{\chi_\K}$,
$0 \leq f^\star - f_D^\star \leq \frac{6 d^3 n^{2 d}
  \|f\|}{\sqrt[\chi_\K]{\log \frac{D}{\chi_\K}}}$.
\end{theorem}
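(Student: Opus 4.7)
The plan is to derive both claims from the effective Putinar Positivstellensatz of Nie and Schweighofer~\cite{Nie07Putinar}. Under Assumption~\ref{hyp:arch}, the quadratic module $\mathcal{Q}(\K)$ is archimedean with $\K\subseteq[-1,1]^n$, so the hypotheses of their main theorem are satisfied for the data $f,g_1,\dots,g_m$ considered here. The constant $\chi_\K$ appearing in the statement is then exactly the constant produced by~\cite{Nie07Putinar}, which depends on $\K$ through an archimedean witness for $N-G_n$ and a quantitative Schm\"udgen-type representation of the box defining polynomials as members of $\mathcal{Q}(\K)$.

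For part (i), I would invoke directly the effective degree bound of~\cite{Nie07Putinar}: for any polynomial $f$ with $f^\star>0$, one has $f \in \mathcal{Q}_D(\K)$ for every even $D \geq \chi_\K \exp(d^2 n^d \|f\|/f^\star)^{\chi_\K}$, where $\|f\|$ is exactly the modified supremum norm introduced just before Theorem~\ref{th:putinar}. Note that since $\K\subseteq[-1,1]^n$, this norm controls $|f|$ uniformly on $\K$, which is what makes the reduction to the $[-1,1]^n$ setting used in~\cite{Nie07Putinar} applicable here.

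For part (ii), the strategy is the standard perturbation trick. Given $\epsilon > 0$, the polynomial $f-(f^\star-\epsilon)$ is strictly positive on $\K$ with minimum $\epsilon$. Applying part (i) with $f$ replaced by this shifted polynomial, one obtains $f-(f^\star-\epsilon) \in \mathcal{Q}_D(\K)$ whenever $D$ exceeds $\chi_\K \exp(d^2 n^d (\|f\|+|f^\star|)/\epsilon)^{\chi_\K}$, which in turn yields $f_D^\star \geq f^\star-\epsilon$. Using $|f^\star|\leq\|f\|$ (a consequence of $\K\subseteq[-1,1]^n$ and the definition of $\|\cdot\|$), inverting the exponential in the degree bound to solve for the smallest admissible $\epsilon=\epsilon(D)$, and keeping track of the multiplicative constants, produces the announced estimate $f^\star-f_D^\star \leq 6 d^3 n^{2d}\|f\|/\sqrt[\chi_\K]{\log(D/\chi_\K)}$, valid as soon as $D \geq \chi_\K \exp(2 d^2 n^d)^{\chi_\K}$ so that the argument of the $\chi_\K$-th root is at least one.

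The main obstacle is bookkeeping rather than substance: one has to match the constants in~\cite{Nie07Putinar} precisely with the form $\chi_\K \exp(\,\cdot\,)^{\chi_\K}$ used here, and carefully propagate the factor $6 d^3 n^{2d}$ through the inversion step in (ii). The constant $\chi_\K$ is not explicit in $\K$; it is inherited from the quantitative Schm\"udgen/archimedean representations, and one merely needs to verify that a single $\chi_\K$ can be chosen to make both bounds (i) and (ii) hold simultaneously.
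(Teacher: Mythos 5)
Your proposal matches the paper exactly: the paper offers no proof of Theorem~\ref{th:putinar} at all, but simply quotes it from~\cite{Nie07Putinar}, whose main theorem gives (i) and whose accompanying corollary derives (ii) by precisely the perturbation-and-inversion argument you sketch. The only slip is the claim $|f^\star|\leq\|f\|$, which already fails for $f=X_1X_2$ on $[-1,1]^2$ (there $\|f\|=1/2$ while $|f^\star|=1$); the correct estimate is $|f^\star|\leq \sup_{[-1,1]^n}|f|\leq (d+1)\,n^d\,\|f\|$, and this extra factor is part of what gets absorbed into the $6d^3n^{2d}$ of (ii), i.e.\ it belongs to the bookkeeping you already defer.
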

In theory, one can certify that $f$ belongs to $\mathcal{Q}_D(\K)$ for
$D = 2k$ large enough, by solving SDP~\eqref{eq:dualsdp2}. Next, we show how to ensure the existence of a strictly feasible solution for
SDP~\eqref{eq:dualsdp2} after replacing the initial set of constraints
$\K$ by $\K'$, defined as follows:
\[ 
\K':= \{ \bmx \in \K : 1 - \bmx^{2 \alpha} \geq 0 \,, \forall \alpha \in
\N^n_k \}
\,. \]
We first give a lower bound for $f^\star$.
%
\begin{proposition}
\label{th:lowerboundcube}
With the above notation and assumptions, one has:
\[
f^\star \geq 2^{-(\tau+ d + d \log_2 n
  + 1)d^{n+1}} d^{-(n+1)d^{n+1}} = 2^{-\tau d^{\bigo{(n)}}}
\,.  \]
\end{proposition}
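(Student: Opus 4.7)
The plan is to express $f^\star$ as a real algebraic number whose minimal polynomial has controlled degree and coefficient bit size, and then to bound it below via a classical lower bound on the non-zero roots of an integer polynomial.

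First, I would introduce an auxiliary variable $e$ and consider the semi-algebraic set
\[
E := \{ e \in \R : \exists \x \in \R^n,\ f(\x) - e = 0,\ g_1(\x) \geq 0,\dots,g_m(\x) \geq 0,\ 1 - x_i^2 \geq 0,\ i=1,\dots,n \} \,.
\]
By Assumption~\ref{hyp:arch}, $\K \subseteq [-1,1]^n$, so the constraints $1-x_i^2 \geq 0$ are implied and $E$ coincides with the image $f(\K)$, a closed bounded interval of the form $[f^\star, f^\star_{\max}]$. Since $f$ is positive on $\K$, we obtain $f^\star = \min E > 0$.

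Second, I would apply~\cite[Theorem~14.16]{BPR06} to the polynomial system defining $E$, regarded as having $m + n + 1$ polynomials of degree at most $d$ in the $n+1$ variables $(x_1,\dots,x_n,e)$ with integer coefficients of bit size at most $\tau$. The theorem provides a description of the projection onto the $e$-axis by univariate polynomials, and in particular yields a non-zero polynomial $P \in \Z[e]$ such that $P(f^\star) = 0$, with degree $D$ of order $d^{n+1}$ and with coefficient bit size bounded by $(\tau + d + d \log_2 n + 1)\,d^{n+1}$. Explicit tracking of the constants in~\cite[Theorem~14.16]{BPR06} is precisely what produces the exponents appearing in the statement of the proposition.

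Third, since $f^\star$ is a positive root of $P \in \Z[e]$, I would invoke the classical Cauchy--Mignotte lower bound on the absolute value of a non-zero real root of an integer polynomial: combining the Cauchy-type inequality $|\alpha| \geq 1/(1 + \|P\|_\infty)$ with its refined form incorporating the degree through a factor of order $D^{-D}$, the announced inequality
\[
f^\star \geq 2^{-(\tau + d + d \log_2 n + 1)d^{n+1}}\, d^{-(n+1)d^{n+1}} = 2^{-\tau d^{\bigo{(n)}}}
\]
follows directly from the degree and coefficient bit-size bounds on $P$ derived in the previous step.

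The main obstacle is the careful bookkeeping of constants when invoking~\cite[Theorem~14.16]{BPR06}: one has to account for the extra variable $e$ (which is responsible for the exponent $n+1$), for the $n$ box constraints $1 - x_i^2 \geq 0$ needed to apply the theorem in its quantitative form without loss of generality, and for the growth of the coefficients along the underlying quantifier-elimination procedure. Once these ingredients are pinned down, the final estimate via the Cauchy--Mignotte root bound is a routine calculation.
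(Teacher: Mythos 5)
Your route --- project $f(\K)$ onto the $e$-axis via quantifier elimination, observe that $f^\star$ is a non-zero root of one of the resulting integer univariate polynomials, and conclude with a Cauchy-type root bound --- is a legitimate and genuinely different way to get the \emph{asymptotic} form $f^\star \geq 2^{-\tau d^{\bigo{(n)}}}$; in fact it is exactly the mechanism the paper itself uses in the proof of Proposition~\ref{th:boundeps}. (Two harmless slips: $E=f(\K)$ need not be a single interval when $\K$ is disconnected, though $f^\star=\min E$ still holds; and the plain Cauchy bound $|\alpha|\geq 1/(1+\|P\|_\infty)$ already suffices, the $D^{-D}$ refinement being unnecessary for the asymptotic claim.)

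The genuine gap concerns the explicit constants. You assert that ``explicit tracking of the constants in \cite[Theorem~14.16]{BPR06} is precisely what produces the exponents appearing in the statement,'' but this is not so: the quantitative output of Theorem~14.16 (degrees and bitsizes of the eliminating polynomials) has its own explicit constants, and there is no reason they would reproduce the very specific quantities $(\tau+d+d\log_2 n+1)d^{n+1}$ and $(n+1)d^{n+1}$. Those exponents come from an entirely different source, namely \cite[Theorem~1]{Jeronimo10}, which bounds the minimum of a positive polynomial over the standard simplex $\Delta_n$ by $2^{-(\tau(\tilde f)+1)d^{n+1}}d^{-(n+1)d^{n+1}}$. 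The paper's proof performs the affine substitution $X_i = 2nY_i-1$, which (using Assumption~\ref{hyp:arch}, i.e.\ $\K\subseteq[-1,1]^n$) sends $\K$ into $\Delta_n$ and inflates the coefficient bitsize to $\tau(\tilde f)\leq \tau + d + d\log_2 n$ --- this is exactly where the factor $\tau+d+d\log_2 n+1$ originates --- and then invokes the Jeronimo--Perrucci bound directly. So your argument, as written, proves a statement of the right shape but not the stated inequality; to recover the explicit bound you would either have to carry out the (nontrivial, and differently-shaped) constant bookkeeping of Theorem~14.16 honestly, or switch to the simplex-minimum bound as the paper does.
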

\begin{proof}
  Let $Y=(Y_1, \ldots, Y_n)$ and $\tilde{f}\in \Z[Y]$ be the
  polynomial obtained by replacing $Y_i$ by $2n Y_i-1$ in $f$.  Note
  that if $\bmx=(x_1,\dots, x_n) \in \K \subseteq [-1, 1]^n$, then
  $\bmy = \left (\left (\frac{x_i+1}{2n}\right )\right )_{1\leq i
    \leq n}$
  lies in the standard simplex $\Delta_n$, so the polynomial
  $\tilde{f}$ takes only positive values over 
  $\Delta_n$.
  Since $x_i = 2 n y_i - 1$ and $(2 n - 1)^d \leq (2 n)^d$, the
  polynomial $\tilde{f}$ has coefficients of bit size at most
  $\tau + d + d \log_2 n$. Then, the desired result follows
  from~\cite[Theorem~1]{Jeronimo10}, stating that
  $\min_{\bmy \in \Delta_n} \tilde{f}(\bmy) >
  2^{-(\tau{(\tilde{f})}+1)d^{n+1}} d^{-(n+1)d^{n+1}}$.
\end{proof}
\begin{theorem}
\label{th:putinarint}
We use the notation and assumptions introduced above.  There exists
$D \in 2 \N$ such
that:\\
(i) $f \in \mathcal{Q}_D(\K)$ with the representation
\[f = f_D^\star + \sum_{j=0}^m \sigma_j g_j\] 
for $f_D^\star > 0$,
$\sigma_j \in \Sigma[X]$ with $\deg (\sigma_j g_j) \leq D$ for all
$j=0,\dots,m$.\\
(ii) $f \in \mathcal{Q}_D(\K')$ with the representation
\[f = \sum_{j=0}^m \mathring{\sigma_j} g_j + \sum_{|\alpha| \leq k}
c_\alpha (1 - X^{2 \alpha}) \]
for $\mathring{\sigma_j} \in \mathring{\Sigma}[X]$ with
$\deg (\mathring{\sigma_j} g_j) \leq D$, for all $j=0,\dots,m$, and
some sequence of positive numbers $(c_\alpha)_{|\alpha| \leq k}$.\\
(iii) There exists a real $C_\K > 0$ depending on $\K$ and
$\varepsilon = \frac{1}{2^N}$ with positive $N \in \N$ such
that
$f - \varepsilon \sum_{|\alpha| \leq k} X^{2 \alpha} \in
\mathcal{Q}_D(\K')$
and $N \leq 2^{\tau d^{n C_\K}}$, where $\tau$ is the maximal bit size
of the coefficients of $f,g_1,\dots,g_m$.
\end{theorem}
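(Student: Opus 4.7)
My plan is to establish the three claims sequentially: (i) follows essentially by invoking Theorem~\ref{th:lasoncstrained}, (ii) is the main technical step in which I redistribute the strictly positive constant $f_D^\star$ obtained in (i) across both the SOS multipliers and the new constraints $1-X^{2\alpha}$, and (iii) reduces to applying (ii) to a suitably perturbed polynomial.

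For part (i), Assumption~\ref{hyp:arch} makes $\mathcal{Q}(\K)$ archimedean, so Theorem~\ref{th:putinar}(ii) provides $D \in 2\N$ large enough that $0 < f_D^\star \leq f^\star$. Theorem~\ref{th:lasoncstrained} then gives an optimal solution $(G_0,\dots,G_m)$ of SDP~\eqref{eq:dualsdp2}, whose spectral decompositions yield $\sigma_j := \sum_i \lambda_{ij} q_{ij}^2 \in \Sigma[X]$ satisfying $\deg(\sigma_j g_j) \leq D$ and $f - f_D^\star = \sum_{j=0}^m \sigma_j g_j$.

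For part (ii), let $r := \binom{n+k}{n}$ and set $c_\alpha := f_D^\star / r > 0$ for each $|\alpha|\leq k$, so that $\sum_\alpha c_\alpha = f_D^\star$ and consequently
\[\sum_{|\alpha|\leq k} c_\alpha (1 - X^{2\alpha}) = f_D^\star - \sum_{|\alpha|\leq k} c_\alpha X^{2\alpha}.\]
For each $j = 1,\dots,m$, let $v_j$ be the vector of monomials of degree $\leq k - w_j$, write $t_j := v_j^T v_j = \sum_{|\beta|\leq k-w_j} X^{2\beta}$, and write $\sigma_j = v_j^T G_j v_j$ with $G_j \succeq 0$. Then $\mathring{\sigma_j} := \sigma_j + \eta_j t_j = v_j^T(G_j + \eta_j I)v_j$ satisfies $G_j + \eta_j I \succ 0$ for every $\eta_j > 0$, so Theorem~\ref{th:intsospdGram} places $\mathring{\sigma_j}$ in $\mathring{\Sigma}[X]$, while $\deg(\mathring{\sigma_j} g_j) \leq D$ is preserved. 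I define
\[\mathring{\sigma_0} := \sigma_0 + \sum_{|\alpha|\leq k} c_\alpha X^{2\alpha} - \sum_{j=1}^m \eta_j t_j g_j,\]
and a direct substitution into the identity of (i) produces the desired representation. Letting $v_0$ be the vector of all monomials of degree $\leq k$, the polynomial $\sigma_0 + \sum_\alpha c_\alpha X^{2\alpha} = v_0^T(G_0 + \mathrm{diag}(c_\alpha))v_0$ lies in $\mathring{\Sigma}[X]$ because $G_0 + \mathrm{diag}(c_\alpha) \succ 0$. Since $\mathring{\Sigma}[X]$ is open in the finite-dimensional space of polynomials of degree $\leq 2k$, choosing each $\eta_j > 0$ small enough ensures $\mathring{\sigma_0} \in \mathring{\Sigma}[X]$.

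For part (iii), I apply (ii) to the perturbed polynomial $f_\varepsilon := f - \varepsilon \sum_{|\alpha|\leq k} X^{2\alpha}$. On $\K \subseteq [-1,1]^n$, $|X^{2\alpha}|\leq 1$, hence $f_\varepsilon \geq f^\star - \varepsilon r$ on $\K$. Choosing $\varepsilon = 2^{-N}$ with $N := \lceil 1 + \log_2(r/f^\star) \rceil$ yields $f_\varepsilon > 0$ on $\K$. Proposition~\ref{th:lowerboundcube} gives $f^\star \geq 2^{-\tau d^{\bigo{(n)}}}$ and $r \leq d^n$, so $N \leq \tau d^{\bigo{(n)}}$, which is dominated by $2^{\tau d^{n C_\K}}$ for any sufficiently large $C_\K > 0$. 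Applying (ii) to $f_\varepsilon$ then provides the claimed representation in $\mathcal{Q}_D(\K')$ for some $D$ controlled via Theorem~\ref{th:putinar}(i). The main obstacle is in (ii): proving that $\mathring{\sigma_0}$ stays in $\mathring{\Sigma}[X]$ after subtracting $\sum_j \eta_j t_j g_j$, whose coefficients need not be non-negative. This is settled by the openness of $\mathring{\Sigma}[X]$ in the ambient polynomial space, combined with continuity in the perturbation parameters $\eta_j$.
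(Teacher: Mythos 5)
Your proof is essentially correct but takes a genuinely different route from the paper in parts (ii) and (iii). For (ii), the paper perturbs the representation of (i) by $-\nu t$ with $t=\sum_j t_j g_j$ and absorbs each monomial $-\nu t_\gamma X^{\alpha+\beta}$ through the explicit identity $-t_\gamma X^{\alpha+\beta}=|t_\gamma|(-1+\tfrac12(1-X^{2\alpha})+\tfrac12(1-X^{2\beta})+\tfrac12(X^\alpha-\sgn{(t_\gamma)}X^\beta)^2)$, then chooses $\nu\leq f_D^\star/(2a)$ so that the constant $f_D^\star-\nu a$ absorbed into $\mathring{\sigma}_0$ stays positive; this yields explicit $c_\alpha$ and an explicit lower bound $\nu$ on the smallest Gram eigenvalues, which the paper reuses verbatim in the complexity analysis of Theorem~\ref{th:putinarsos}. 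You instead put all of $f_D^\star$ uniformly into the $c_\alpha$'s and make the $\sigma_j$ interior by adding $\eta_j t_j$, compensating inside $\sigma_0$ and invoking openness; this is shorter and cleaner but non-constructive in the $\eta_j$, so it proves the existence statement without the quantitative handles the paper extracts. For (iii) the paper simply sets $\varepsilon=\nu/2$ and observes that $\mathring{\sigma}_0-\varepsilon t_0$ remains interior, whereas you re-run (ii) on $f_\varepsilon$; that works provided you take $D$ to be the larger of the two thresholds from Theorem~\ref{th:putinar} (for $f$ and for $f_\varepsilon$), which is legitimate since your choice of $\varepsilon$ keeps $f_\varepsilon^\star\geq f^\star/2$ and $\|f_\varepsilon\|\leq\|f\|+1$ — but the theorem asserts a single $D$ for all three parts, so you should say this explicitly.

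Two caveats. First, ``$\mathring{\Sigma}[X]$ is open in the space of polynomials of degree $\leq 2k$'' is not literally true under the paper's Newton-polytope characterization (Theorem~\ref{th:intsospdGram}): $X^2$ lies in $\mathring{\Sigma}[X]$ in that sense but $X^2-\epsilon$ is not SOS for any $\epsilon>0$. What saves you is that $\sigma_0+\sum_{|\alpha|\leq k}c_\alpha X^{2\alpha}$ has a positive definite Gram matrix with respect to the \emph{full} monomial vector indexed by $\N^n_k$, and the set of polynomials of degree $\leq 2k$ admitting such a Gram matrix is open; since all $c_\alpha>0$, the vertex coefficients stay positive under small perturbation and the Newton polytope remains the full simplex, so Theorem~\ref{th:intsospdGram} still applies. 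State it that way. Second, in (iii) your $r=\binom{n+k}{n}$ has $k=D/2$, not $\lceil d/2\rceil$, so the bound $r\leq d^n$ and the ensuing $N\leq\tau d^{\bigo{(n)}}$ are wrong; the correct estimate is $N\leq 2+\log_2 r+\log_2(1/f^\star)\leq n\log_2 D+\tau d^{\bigo{(n)}}$, which still fits under $2^{\tau d^{n C_\K}}$ because $\log_2 D$ is itself bounded by a quantity of that order, so the final claim survives.
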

\begin{proof}
  Let $\chi_\K$ be as in Theorem~\ref{th:putinar} and $D = 2 k$ be the
  smallest integer larger than $\underline{D}$ given by:
  \[\underline{D} := \max\{\chi_\K \exp \bigl(\frac{12 d^3 n^{2d}
    \|f\|}{f^\star}\bigr)^{\chi_\K}, \chi_\K \exp(2 d^2 n^d)^{\chi_\K} \}  \,.\]
  Theorem~\ref{th:putinar} implies $f \in \mathcal{Q}_D(\K)$
  and
  $f^\star - f_{D}^\star \leq \frac{6 d^3 n^{2 d}
    \|f\|}{\sqrt[\chi_\K]{\log \frac{{D}}{\chi_\K}}} \leq
  \frac{f^\star}{2}$.

  (i) This yields the representation
  $f - f_D^\star = \sum_{j=0}^m \sigma_j g_j$, with
  $f_D^\star \geq \frac{f^\star}{2} > 0$, $\sigma_j \in \Sigma[X]$ and
  $\deg (\sigma_j g_j) \leq D$ for all $j=0,\dots,m$.

  (ii) For $1\leq j\leq m$, let us define
  $t_j := \sum_{|\alpha| \leq k - w_j} X^{2 \alpha}$,
  $t_0 := \sum_{|\alpha| \leq k} X^{2 \alpha}$ and
  $t := \sum_{j=0}^m t_j g_j$.  For a given $\nu > 0$, we use the
  perturbation polynomial
  $-\nu t = -\nu \sum_{|\gamma|\leq D}t_\gamma X^\gamma$. For each
  term $-t_\gamma X^\gamma$, one has $\gamma = \alpha + \beta$ with
  $\alpha, \beta \in \N^n_k$, thus
  $-t_\gamma X^\gamma = |t_\gamma|(-1 + \frac{1}{2} (1 - X^{2
    \alpha}) + \frac{1}{2} (1 - X^{2 \beta}) + \frac{1}{2} (X^\alpha -
  \sgn{(t_\gamma)} X^\beta)^2)$.
As in the proof of Proposition~\ref{th:bitmultivsos}, let us note
  $\Delta(t) := \{(\alpha, \beta) : \alpha + \beta \in \spt{t} \,,
  \alpha,\beta \in \N^n_k \,, \alpha \neq \beta \}$.
  Hence, there exist $d_\alpha \geq 0$ for all $\alpha \in \N^n_k$ such
  that 
  \[f = f - \nu t + \nu t =
   f_D^\star - \sum_{|\gamma|\leq D} \nu |t_\gamma| + \sum_{j=0}^m
  \sigma_j g_j + \nu t + \sum_{{ |\alpha| \leq k }} d_\alpha (1 -
  X^{2 \alpha}) + \nu \sum_{ (\alpha, \beta) \in \Delta(t) }
  \dfrac{|t_{\alpha + \beta}|}{2} (X^\alpha -
  \sgn{(t_{\alpha+\beta})} X^\beta)^2  \,.\]
Since one has not necessarily $d_\alpha > 0$ for all $\alpha \in \N^n_k$, we now explain how to 
  handle the case when $d_\alpha = 0$ for 
  $\alpha \in \N^n_k$. We write
\begin{align*}
- \sum_{|\gamma|\leq D} \nu |t_\gamma| + \sum_{{ |\alpha| \leq k }}
  d_\alpha (1 - X^{2 \alpha}) = & - \sum_{|\gamma|\leq D} \nu |t_\gamma|
  - \sum_{\alpha : d_\alpha = 0} \nu 
  + \sum_{\alpha : d_\alpha = 0}
  \nu (1 - X^{2\alpha}) 
  + \sum_{\alpha : d_\alpha = 0} \nu X^{2
    \alpha} \\
    & + \sum_{{ |\alpha| : d_\alpha = 0 }} d_\alpha (1 - X^{2
    \alpha}) + \sum_{{ |\alpha| : d_\alpha > 0 }} d_\alpha (1 - X^{2
    \alpha}) \,. 
    \end{align*}
For $\alpha \in \N^n_k$, we define $c_\alpha := \nu$ if
  $d_\alpha = 0$ and $c_\alpha := d_\alpha$ otherwise, 
  $a := \sum_{|\gamma|\leq D} |t_\gamma| + \sum_{\alpha : d_\alpha =
    0} 1$,
  $\mathring{\sigma}_j := \sigma_j + \nu t_j$, for each $j=1,\dots,m$
  and
  \[\mathring{\sigma}_0 := f_D^\star - \nu a + \sigma_0 + \nu t_0 + \nu
  \sum_{ (\alpha, \beta) \in \Delta(t) } \dfrac{|t_{\alpha +
      \beta}|}{2} (X^\alpha - \sgn{(t_{\alpha+\beta})} X^\beta)^2 +
  \sum_{\alpha : d_\alpha = 0} \nu X^{2 \alpha} \,. \]
    So, there
  exists a sequence of positive numbers $(c_\alpha)_{|\alpha|\leq k}$
  such that
  \[f = \sum_{j=0}^m \mathring{\sigma_j} g_j + \sum_{{ |\alpha| \leq k
    }} c_\alpha (1 - X^{2 \alpha}) \,. \]
%
%
  Now, let us select $\nu := \frac{1}{2^M}$ with $M$ being the
  smallest positive integer such that
  $0 < \nu \leq \frac{f_D^\star}{2 a}$. 
 This implies the existence of a positive definite Gram matrix for $\mathring{\sigma_0}$, thus by Theorem~\ref{th:intsospdGram}, 
  $\mathring{\sigma_0} \in \mathring{\Sigma}[X]$. Similarly, for
  $1\leq j\leq m$, $\mathring{\sigma}_j$  belongs to
  $\mathring{\Sigma}[X]$, which proves the second claim.

  (iii) Let $N := M+1$ and
  $\varepsilon := \frac{1}{2^N} = \frac{\nu}{2}$. 
  One has
  \[f - \varepsilon \sum_{|\alpha| \leq k} X^{2 \alpha} = f -
  \varepsilon t_0 = \mathring{\sigma}_0 - \varepsilon t_0 +
  \sum_{j=1}^m \mathring{\sigma}_j g_j + \sum_{|\alpha| \leq k}
  c_\alpha (1 - X^{2 \alpha}) \,. \]
  Thus, $\sigma_0 + (\nu - \varepsilon) t_0 \in
  \mathring{\Sigma}[X]$.
  This implies that
  $\mathring{\sigma}_0 - \varepsilon t_0 \in \mathring{\Sigma}[X]$ and
  $f - \varepsilon t_0 \in \mathcal{Q}_D(\K')$. Next, we derive a
  lower bound of $\frac{f^\star_D}{a}$.
  Since
  $t =\sum_{|\alpha|\leq k} X^{2\alpha} + \sum_{j=1}^m g_j
  \sum_{|\alpha| \leq k - w_j} X^{2\alpha}$,
  one has
  $\sum_{|\gamma| \leq D} |t_\gamma| \leq 2^\tau (m+1)
  \binom{n+D}{n}$.
  This implies that
  \[a \leq 2^{\tau} (m+1) \binom{n+D}{n} + \binom{n+k}{k} \leq 2^{\tau}
  (m+2) \binom{n+D}{n}
 \,. \]
  Recall that $\frac{f^\star}{2} \leq f^\star_D$, implying
  \[ \frac{f^\star_{{D}}}{a} \geq \frac{f^\star }{2^{\tau+1} (m+2)
    \binom{n+{D}}{n}} \geq \frac{1}{(m+2) 2^{\tau d^{\bigo{(n)}}}
    {D}^n} \,, \]
  where the last inequality follows from
  Theorem~\ref{th:lowerboundcube}.  Let us now give an upper bound of
  $\log_2 D$. First, note that for all $\alpha \in \N^n$,
  $\frac{|\alpha|!}{\alpha_1!\cdots\alpha_n!} \geq 1$, thus
  $\|f\| \leq 2^\tau$.
  Since $D$ is the smallest even integer larger than $\underline{D}$,
  one has
  \[ 
  \log_2 D \leq 1 + \log_2 \underline{D} \leq 1 + \log \chi_\K + (12 d^3
  n^{2 d} 2^\tau 2^{ \tau d^{\bigo{(n)}}})^{\chi_\K}
 \,. \]
  Next, since $N$ is the smallest integer such that
  $\varepsilon = \frac{1}{2^N} = \frac{\nu}{2} \leq \frac{f^\star_D}{2
    a}$,
  it is enough to take
  \[
  N \leq 1 + \log_2 (m+2) + \tau d^{\bigo{(n)}} + n \log_2 D \leq
  2^{\tau d^{n C_\K}}
  \]
  for some real $C_\K > 0$ depending on $\K$, the desired result.
\end{proof}

\begin{algorithm}
  \caption{$\putinarsos$.
    }
\label{alg:putinarsos}
\begin{algorithmic}[1]
\Require $f \in \Z[X]$, $\K := \{\x \in \R^n : g_1(\x) \geq 0, \dots, g_m(\x) \geq 0 \}$ with $g_1,\dots,g_m \in \Z[X]$, positive $\varepsilon \in \Q$, precision parameters $\delta, R \in \N$ for the SDP solver 
,  precision $\delta_c \in \N$ for the Cholesky's decomposition
\Ensure lists $\clist_0,\dots,\clist_m, \calpha$ of numbers in $\Q$ and lists $\slist_0,\dots,\slist_m$ of polynomials in $\Q[X]$
\State $k \gets \max \{ \lceil d / 2 \rceil,w_1, \dots,  w_m\}$, $D \gets 2 k$, $g_0 := 1$
\While {$f  \notin \mathcal{Q}_D(\K)$} \label{line:DPuti} 
 $k \gets k + 1$, $D \gets D + 2$
\EndWhile \label{line:DPutf}
\State $P := \N^n_D$,   $\K':= \{ \x \in \K : 1 - \x^{2 \alpha} \geq 0 \,, \forall \alpha \in \N^n_k \}$ \label{line:npP}
\State $t := \sum_{\alpha \in P/2} X^{2 \alpha}$, $f_\varepsilon \gets f - \varepsilon t$
\While {$f_\varepsilon \notin  \mathcal{Q}_D(\K')$} \label{line:epsiP}
 $\varepsilon \gets \frac{\varepsilon}{2}$, $f_\varepsilon \gets f - \varepsilon t$
\EndWhile \label{line:epsfP}
\State ok := false
\While {not ok} \label{line:deltaiP}
\State $[\tilde{G_0},\dots,\tilde{G}_{m},  \tilde{\lambda}_0,\dots,\tilde{\lambda}_{m},(\tilde{c}_\alpha)_{|\alpha| \leq k}], \gets \sdpconfun{f_\varepsilon}{ \delta}{R}{\K'}$ \label{line:sdpcon}
\State $\calpha \gets (\tilde{c}_\alpha)_{|\alpha| \leq k}$
\For {$j \in \{0,\dots,m\}$}
\State $(s_{1j},\dots,s_{r_j j}) \gets \choleskyfun{\tilde{G}_j}{\tilde{\lambda}_j}{\delta_c}$, $\tilde{\sigma}_j := \sum_{i=1}^{r_j} s_{i j}^2 $ \label{line:cholcon}
\State $\clist_j \gets [1,\dots,1] $, $\slist_j \gets [s_{1j},\dots,s_{r_j j}]$
\EndFor
\State $u \gets f_\varepsilon - \sum_{j=0}^m   \tilde{\sigma_j} \, g_j - \sum_{|\alpha| \leq k} \tilde{c}_\alpha (1 - X^{2 \alpha})$
\For {$\alpha \in P/2$}  $\varepsilon_{\alpha} := \varepsilon$
\EndFor
\State $\clist, \slist,(\varepsilon_\alpha) \gets \absorbfun{u}{P}{(\varepsilon_\alpha)}{\clist}{\slist}$ \label{line:absorbP}
\If {$\min_{\alpha \in P/2} \{ \varepsilon_\alpha \} \geq 0$} ok := true 
\Else $\ \delta \gets 2 \delta$, $R \gets 2 R$, $\delta_c \gets 2 \delta_c$
\EndIf
\EndWhile \label{line:deltafP}
\For {$\alpha \in P/2$}  
\State $\clist_0 \gets  \clist_0 \cup \{ \varepsilon_\alpha \}$, $\slist_0 \gets  \slist_0 \cup \{ \x^\alpha \}$
\EndFor
\State \Return $\clist_0,\dots,\clist_m, \calpha,  \slist_0,\dots,\slist_m$
\end{algorithmic}
\if{
\begin{algorithm}
\caption{$\putinarsos$: algorithm to compute exact rational Putinar's representations of polynomials positive over basic compact semialgebraic sets.}
\label{alg:putinarsos}
\begin{algorithmic}[1]
\Require $f, g_1,\dots,g_m \in \Z[\x]$, $\varepsilon \in \Q$ such that $0 < \varepsilon$, precision parameters $\delta, R \in \N$ for the SDP solver,  precision $\delta_c \in \N$ for the Cholesky's decomposition
\Ensure list $\clist$ of numbers in $\Q$ and list $\slist$ of polynomials in $\Q[\x]$
\State $k \gets \max \{ \lceil d / 2 \rceil,w_1, \dots,  w_m\}$, $D \gets 2 k$
\State $\K := \{\x \in \R^n : g_1(\x) \geq 0, \dots, g_m(\x) \geq 0 \}$, $g_0 := 1$
\While {$f  \notin \mathcal{Q}_D(\K)$} \label{line:DPuti} 
\State $k \gets k + 1$, $D \gets D + 2$
\EndWhile \label{line:DPutf}
\State $\K':= \{ \x \in \K : 1 - \x^{2 \alpha} \geq 0 \,, \forall \alpha \in \N^n_k \}$, ok := false
\While {not ok} \label{line:KPuti} 
\State $[\tilde{G_0},\dots,\tilde{G}_{m},  \tilde{\lambda}_0,\dots,\tilde{\lambda}_{m},(c_\alpha)_{|\alpha| \leq k}], \gets \sdpconfun{f_\varepsilon}{ \delta}{R}{\K'}$ \label{line:sdpcon}
\For {$j \in \{0,\dots,m\}$}
\State $(s_{1j},\dots,s_{r_j j}) \gets \choleskyfun{\tilde{G}_j}{\tilde{\lambda}_j}{\delta_c}$, $\mathring{\sigma}_j := \sum_{i=1}^{r_j} s_{i j}^2 $ \label{line:cholcon}
\EndFor
\If {$\min_{0\leq j \leq m} \lambda_j > 0$ \textbf{and} $\min_{|\alpha|\leq k} c_\alpha > 0$ } 
ok := true 
\Else $\ \delta \gets 2 \delta$, $R \gets 2 R$, $\delta_c \gets 2 \delta_c$
\EndIf
\EndWhile \label{line:KPutf}
\State $\mathring{\sigma}_0 := f - \sum_{j=1}^m \mathring{\sigma}_j  g_j - \sum_{|\alpha| \leq k} c_\alpha (1 - X^{2 \alpha})$
\State \Return $\intsosfun{\mathring{\sigma}_0}{\varepsilon}{\delta}{R}{\delta_c}$
\end{algorithmic}
\end{algorithm}
}\fi
\end{algorithm}

We can now present Algorithm~$\putinarsos$. For $f \in \Z[X]$ positive
over a basic compact semi-algebraic set $S$ satisfying
Assumption~\ref{hyp:arch}, the first loop 
outputs the smallest positive integer
$D = 2k$ such that $f \in \mathcal{Q}_D(\K)$. Then the procedure is
similar to $\intsos$. As for the first loop of
$\intsos$, the loop from line~\lineref{line:epsiP} to
line~\lineref{line:epsfP} allows to obtain a perturbed polynomial
$f_\varepsilon \in \mathcal{Q}_D(\K')$, with
$\K' := \{ \bmx \in \K : 1 - \bmx^{2 \alpha} \geq 0 \,, \forall \alpha
\in \N^n_k \}$.
Then one solves SDP~\eqref{eq:dualsdp2} with the $\sdpcon$ procedure
and performs Cholesky's decomposition to obtain an approximate
Putinar's representation of $f_\varepsilon = f - \varepsilon t$
and a remainder $u$. Next, we apply the $\absorb$ subroutine as 
in $\intsos$. The rationale is that with large
enough precision parameters for the procedures $\sdpcon$ and
$\cholesky$, one finds an exact weighted SOS decomposition of
$u + \varepsilon t$, which yields in turn an exact Putinar's
representation of $f$ in $\mathcal{Q}_D(\K')$ with rational
coefficients.
\begin{example}
\label{ex:putinar}
Let us apply $\putinarsos$ to $f =-X_1^2 - 2 X_1 X_2 - 2 X_2^2 + 6$,
$S := \{(x_1,x_2) \in \R^2 : 1 - x_1^2 \geq 0, 1 - x_2^2 \geq 0\}$
and the same precision parameters as in Example~\ref{ex:intsos}. The
first and second loop yield $D = 2$ and 
$\varepsilon = 1$. After running $\absorb$, we obtain the exact
Putinar's representation 
$f = \frac{23853407}{292204836} + \frac{23}{49} X_1^2 +
\frac{130657269}{291009481} X_2^2 + \frac{1}{2442^2} + (X_1-X_2)^2 +
(\frac{X_2}{2437})^2+(\frac{11}{7})^2 (1-X_1^2) + (\frac{13}{7})^2
(1-X_2^2)$.
\end{example}

\begin{theorem}
\label{th:putinarsos}
We use the notation and assumptions introduced above.  
For some $C_\K > 0$ depending on $\K$, there exist $\varepsilon$,
$\delta$, $R$, $\delta_c$ and $D = 2 k$ of bit sizes less than 
$\bigo{(2^{\tau d^{n C_\K}})}$ for which
$\putinarsosfun{f}{\K}{\varepsilon}{\delta}{R}{\delta_c}$ terminates
and outputs an exact Putinar's representation with rational
coefficients of $f \in \mathcal{Q}(\K')$, with
$\K' := \{ \bmx \in \K : 1 - \bmx^{2 \alpha} \geq 0 \,, \forall \alpha
\in \N^n_k \}$.
The maximum bit size of these coefficients is bounded by
$\bigo{(2^{\tau d^{n C_\K}})}$ and the procedure runs in boolean time
$\bigo{\bigl(2^{2^{\tau d^{n C_\K}}} \bigr)}$.
\end{theorem}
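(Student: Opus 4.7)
The plan is to assemble the existence results from Theorem~\ref{th:putinarint} with the numeric-symbolic analysis already carried out for $\intsos$ in Proposition~\ref{th:bitmultivsos} and Theorem~\ref{th:costmultivsos}, with the key difference that the relaxation degree $D=2k$ now contributes an extra exponential layer.

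First, I would establish termination of the two outer loops. The loop from line~\lineref{line:DPuti} to line~\lineref{line:DPutf} terminates by Theorem~\ref{th:putinarint}(i), and inspecting the proof of Theorem~\ref{th:putinarint}(iii) yields $\log_2 D \leq 2^{\tau d^{n C_\K}}$ for some constant $C_\K>0$ depending on $\K$. Theorem~\ref{th:putinarint}(iii) then supplies $\varepsilon = 1/2^N$ with $N \leq 2^{\tau d^{n C_\K}}$ such that $f_\varepsilon \in \mathcal{Q}_D(\K')$, so the loop from line~\lineref{line:epsiP} to line~\lineref{line:epsfP} terminates and produces $\varepsilon$ of bit size at most $2^{\tau d^{n C_\K}}$.

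Second, I would mimic the proof of Proposition~\ref{th:bitmultivsos} for the inner loop from line~\lineref{line:deltaiP} to line~\lineref{line:deltafP}. Theorem~\ref{th:putinarint}(ii)--(iii) furnish a strictly feasible solution of SDP~\eqref{eq:dualsdp2} associated with $\K'$: the polynomials $\mathring{\sigma}_j$ all lie in $\mathring{\Sigma}[X]$, so by Theorem~\ref{th:intsospdGram} their Gram matrices are positive definite. An analog of Proposition~\ref{th:boundR} applied to SDP~\eqref{eq:dualsdp2} then yields $\delta$, $R$ and $\delta_c$ of bit size at most $\tau \cdot D^{\bigo{(n)}}$ for which $\sdpcon$ returns suitable $\tilde G_j$, $\tilde c_\alpha$ and $\cholesky$ returns rational $s_{ij}$. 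Substituting the bound $\log_2 D \leq 2^{\tau d^{n C_\K}}$ gives bit-size estimates of $2^{\tau d^{n C_\K}}$ (after enlarging $C_\K$) for $\delta$, $R$ and $\delta_c$. The $\absorb$ subroutine, identical to the one in $\intsos$, then converts the numerical remainder $u$ into an exact weighted SOS decomposition of $\varepsilon t + u$, exactly as in Proposition~\ref{th:bitmultivsos}, which combined with the approximate decomposition yields a rational Putinar representation of $f$ over $\K'$.

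Third, the final estimates follow from careful bookkeeping. By the bound from Proposition~\ref{th:bitmultivsos} applied to $f_\varepsilon$ and SDP dimension $r \leq \binom{n+k}{n} \leq D^n$, the coefficients produced by $\cholesky$ and $\absorb$ have bit size at most $\tau(f_\varepsilon) \cdot D^{\bigo{(n)}}$. Using $\tau(f_\varepsilon) \leq \tau + N \leq 2^{\tau d^{n C_\K}}$ and $D^{\bigo{(n)}} \leq 2^{\bigo{(n \cdot 2^{\tau d^{n C_\K}})}}$, absorbing the factor $n$ into the polynomial exponent $\tau d^{n C_\K}$ yields coefficient bit sizes bounded by $2^{\tau d^{n C_\K}}$. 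For the running time, following Theorem~\ref{th:costmultivsos}, the ellipsoid method performs $n_{\text{sdp}}^{\bigo{(1)}}$ arithmetic operations with $n_{\text{sdp}} \leq D^n$, each on numbers of bit size $\bigo{(\log R + \delta)} \leq 2^{\tau d^{n C_\K}}$, so the total boolean cost is bounded by $D^{\bigo{(n)}} \cdot \text{poly}(2^{\tau d^{n C_\K}}) \leq 2^{2^{\tau d^{n C_\K}}}$. The main obstacle is propagating the two levels of exponentiation consistently: the SDP dimension $D^n$ is already doubly exponential in $\tau d^n$, and arithmetic on bit sizes $2^{\tau d^{n C_\K}}$ contributes an additional exponential factor, so keeping the bounds inside a single double exponential requires absorbing all polynomial factors in $n$ into $C_\K$.
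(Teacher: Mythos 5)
Your overall strategy is the paper's: combine Theorem~\ref{th:putinarint} (termination of the two outer loops, with $D$ and $N$ bounded by $2^{\tau d^{n C_\K}}$) with the error analysis of Proposition~\ref{th:bitmultivsos} and Theorem~\ref{th:costmultivsos} for the inner loop, using the strict feasibility of SDP~\eqref{eq:dualsdp2} furnished by the $\mathring{\sigma}_j$ and the $c_\alpha$. The termination argument and the doubly exponential running-time estimate are essentially as in the paper.

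There is, however, a genuine gap in your bit-size bookkeeping. You bound the output coefficients by $\tau(f_\varepsilon)\cdot D^{\bigo{(n)}}$ and then claim this collapses to $2^{\tau d^{n C_\K}}$ by ``absorbing $n$ into the exponent.'' Since $\log_2 D \leq 2^{\tau d^{n C_\K}}$, the quantity $D^{\bigo{(n)}} = 2^{\bigo{(n\, 2^{\tau d^{n C_\K}})}}$ is \emph{doubly} exponential in $\tau d^{n C_\K}$, and no enlargement of $C_\K$ turns $2^{n\,2^{\tau d^{n C_\K}}}$ into $2^{\tau d^{n C_\K'}}$; as written, your argument only yields a doubly exponential coefficient bound, contradicting the statement you are proving. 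The error comes from transplanting the conclusion of Proposition~\ref{th:bitmultivsos} with $d$ replaced by $D$: in that proposition the multiplicative factor $d^{\bigo{(n)}}$ originates from the perturbation bound $N \leq \tau d^{\bigo{(n)}}$ of Proposition~\ref{th:boundeps}, whereas here the analogous bound is $N \leq 2^{\tau d^{n C_\K}}$ from Theorem~\ref{th:putinarint}(iii). The matrix dimension $r_0 \leq \binom{n+D}{n}$ must enter the precision parameters only \emph{logarithmically}: one takes $2^{-\delta}\leq \varepsilon/(2r_0)$ and $\delta_c$ such that $D^n 2^{\tau}\frac{\sqrt{r_0}(r_0+1)2^{-\delta_c}R}{1-(r_0+1)2^{-\delta_c}} \leq \frac{\varepsilon}{2r_0((m+1)+r_0)}$ (the extra factor $\sum_{\delta}|g_{j\delta}|\leq D^n2^{\tau+1}$ accounting for the additional terms $\trace(\tilde G_j C_{j\gamma})$ and $\tilde c_\alpha$ in each equality constraint of SDP~\eqref{eq:dualsdp2}, a point your sketch skips), so that $\delta$, $\delta_c$ and $\log_2 R$ are sums of terms $N$, $\tau$, $n\log_2 D$, $\log_2 r_0$, each in $\bigo{(2^{\tau d^{n C_\K}})}$. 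Only the number of matrix entries, not the coefficient bit size, picks up the factor $D^{\bigo{(n)}}$, which is why the running time is doubly exponential while the coefficients stay in $\bigo{(2^{\tau d^{n C_\K}})}$.
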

\begin{proof}
  The loops going from line~\lineref{line:DPuti} to
  line~\lineref{line:DPutf} and from line~\lineref{line:epsiP} to
  line~\lineref{line:epsfP} always terminate as respective
  consequences of Theorem~\ref{th:putinarint}~(i) and
  Theorem~\ref{th:putinarint}~(iii) with $D \leq 2^{\tau d^{n C_\K}}$,
  $\varepsilon = \frac{1}{2^N}$, $N \leq 2^{\tau d^{n C_\K}}$, for
  some real $C_\K > 0$ depending on $\K$.

  What remains to prove is similar to
  Proposition~\ref{th:bitmultivsos} and
  Theorem~\ref{th:costmultivsos}.  Let $\nu$,
  $\mathring{\sigma}_0,\dots,\mathring{\sigma}_m, (c_\alpha)_{|\alpha|
    \leq k}$
  be as in the proof of Theorem~\ref{th:putinarint}. Note that $\nu$
  (resp.~$\varepsilon-\nu$) is a lower bound of the smallest
  eigenvalues of any Gram matrix associated to $\mathring{\sigma}_j$
  (resp.~$\mathring{\sigma}_0$) for $1\leq j\leq m$. In addition,
  $c_\alpha \geq \nu$ for all $\alpha \in \N^n_k$.  When the $\sdp$
  procedure at line~\lineref{line:sdpcon} succeeds, the matrix
  $\tilde{G}_j$ is an approximate Gram matrix of the polynomial
  $\mathring{\sigma}_j$ with $\tilde{G}_j \succeq 2^{\delta} I $,
  $\sqrt{\trace{(\tilde{G}_j^2)}} \leq R$, we obtain a positive
  rational approximation $\tilde{\lambda}_j \geq 2^{-\delta}$ of the
  smallest eigenvalue of $\tilde{G}_j$, $\tilde{c_\alpha}$ is a
  rational approximation of $c_\alpha$ with
  $\tilde{c_\alpha} \geq 2^{-\delta}$, and $\tilde{c_\alpha} \leq R$,
  for all $j=0,\dots,m$ and $\alpha \in \N^n_k$.  This happens when
  $2^{-\delta} \leq \varepsilon$ and
  $2^{-\delta} \leq \varepsilon - \nu$, thus for
  $\delta = \bigo{(2^{\tau d^{n C_\K}})}$. As in the proof of
  Proposition~\ref{th:boundR}, we derive a similar upper bound of $R$
  by a symmetric argument while considering a Putinar representation
  of $\overline{f}_D - f \in \mathcal{Q}_D(\K')$, where
  $\overline{f}_D := \inf \{b : b - f \in \mathcal{Q}_D(\K) \}$.  As
  for the second loop of Algorithm~$\intsos$, the third loop of
  $\putinarsos$ terminates when the remainder polynomial
  $u = f_\varepsilon - \sum_{j=0}^m \tilde{\sigma_j} \, g_j -
  \sum_{|\alpha| \leq k} \tilde{c}_\alpha (1 - X^{2 \alpha})$
  satisfies $|u_\gamma| \leq \frac{\varepsilon}{r_0}$, where
  $r_0 = \binom{n+k}{n}$ is the size of $P/2 = \N^n_k$. As in the
  proof of Proposition~\ref{th:bitmultivsos}, one can show that this
  happens when $\delta$ and $\delta_c$ are large enough.

  To bound the precision $\delta_c$ required for Cholesky's
  decomposition, we do as in the proof of
  Proposition~\ref{th:bitmultivsos}.
%
  The difference now is that there are $m + \binom{n+k}{k} = m + r_0$
  additional terms in each equality constraint of
  SDP~\eqref{eq:dualsdp2}, by comparison with SDP~\eqref{eq:dualsdp}.
  Thus, we need to bound for all $j=1\,\dots,m$, $\alpha \in \N^n_k$
  and $\gamma \in \spt{u}$ each term
  $|\trace{(\tilde{G_j} C_{j \gamma})} - (g_j \tilde{\sigma})_\gamma|$
  related to the constraint $g_j \geq 0$ as well as each term (omitted
  for conciseness) involving $\tilde{c}_\alpha$ related to the
  constraint $1 - X^{2 \alpha} \geq 0$.
  By using the fact that
  \[
  \trace{(\tilde{G}_j C_{j\gamma})} = \sum_\delta g_{j \delta}
  \sum_{\alpha + \beta + \delta = \gamma} \tilde{G}_{j \alpha,\beta}
 \,, \]
  we obtain
  \[
  |\trace{(\tilde{G_j} C_{j \gamma})} - (g_j \tilde{\sigma})_\gamma|
  \leq \sum_\delta |g_{j \delta}| \frac{\sqrt{r_j}(r_j+1)2^{-\delta_c}
    \, R}{1 - (r_j+1)2^{-\delta_c}} 
    \,, \]    
  where $r_j$ is the size of $\tilde{G_j}$.
  Note that the size $r_0$ of the matrix $\tilde{G_0}$ satisfies
  $r_0 \geq r_j$ for all $j=1,\dots,m$. In addition, $\deg g_j \leq D$
  implies
  \[
  \sum_\delta |g_{j \delta}| \leq \binom{n+\deg g_j}{n} 2^\tau \leq
  \binom{n+D}{n} 2^\tau \leq D^n 2^{\tau+1}
 \,. \]
  This yields an upper bound of 
  $D^n 2^{\tau+1} \frac{\sqrt{r_0}(r_0+1)2^{-\delta_c} \, R}{1 -   (r_0+1)2^{-\delta_c}}$.
  We obtain a similar bound (omitted for conciseness) for each term
  involving $\tilde{c}_\alpha$.

  Then, we take the smallest $\delta$ such that
  $2^{-\delta} \leq \frac{\epsilon}{2 r_0}$ and the smallest
  $\delta_c$ such that
  \[ D^n 2^{\tau} \frac{\sqrt{r_0}(r_0+1)2^{-\delta_c} \, R}{1 -
    (r_0+1)2^{-\delta_c}} \leq \frac{\varepsilon}{2 r_0 ((m+1) +
    r_0)} 
   \,, \]
  which ensures that $\putinarsos$ terminates. One proves that the
  procedure outputs an exact Putinar's representation of
  $f \in \mathcal{Q}(\K')$ with rational coefficients of maximum bit
  size bounded by $\bigo{(2^{\tau d^{n C_\K}})}$.

  As in the proof of Theorem~\ref{th:costmultivsos}, let $\nsdp$ be
  the sum of the sizes of the matrices involved in
  SDP~\eqref{eq:dualsdp2} and $\msdp$ be the number of entries. Note
  that 
  \[\nsdp \leq (m+1) r_0 + r_0 \leq (m+2) \binom{n+D}{n}
  \]
  and
  $\msdp := \binom{n+D}{n}$.  To bound the boolean running time, we
  consider the cost of solving SDP~\eqref{eq:dualsdp2}, which is
  performed in
  $\bigo{( \nsdp^4 \log_2(2^\tau \nsdp \, R \, 2^\delta) )}$
  iterations of the ellipsoid method, where each iteration requires
  $\bigo{(\nsdp^2(\msdp+\nsdp) )}$ arithmetic operations over
  $\log_2(2^\tau \nsdp \, R \, 2^\delta)$-bit numbers.  Since $\msdp$
  is bounded by $\binom{n+D}{n} \leq 2 D^n$ and
  $\log_2 D = \bigo{({2^{\tau d^{n C_\K}}} )}$, one has
  $\msdp = \bigo{\bigl(2^{2^{\tau d^{n C_\K}}} \bigr)}$. We obtain the
  same bound for $\nsdp$, which ends the proof.
\end{proof}
The complexity is polynomial in the degree $D$ of the representation,
often close in practice to the degrees of the involved polynomials, as
shown in Section~\ref{sec:benchs}.


\section{Practical experiments}
\label{sec:benchs}
We provide practical performance results for Algorithms~$\intsos$,
$\polyasos$ and $\putinarsos$.  These are implemented in a library,
called $\multivsos$, written in Maple. More details about installation
and benchmark execution are given on the two webpages dedicated to
univariate\footnote{\url{https://github.com/magronv/univsos}} and
multivariate\footnote{\url{https://github.com/magronv/multivsos}}
polynomials.  This tool is available within the RAGlib Maple
package\footnote{\url{http://www-polsys.lip6.fr/~safey/RAGLib/}}.  
All
results were obtained on an Intel Core i7-5600U CPU (2.60 GHz) with
16Gb of RAM. We use the Maple~\texttt{Convex}
package\footnote{\url{http://www-home.math.uwo.ca/~mfranz/convex}} to
compute Newton polytopes. 
Our subroutine $\sdp$ relies on the
arbitrary-precision solver SDPA-GMP~\cite{Nakata10GMP} and the
$\cholesky$ procedure is implemented with the
function~\texttt{LUDecomposition} available within Maple. Most of the
time is spent in the $\sdp$ procedure for all benchmarks.

In Table~\ref{table:bench1}, we compare the performance of
$\multivsos$ for nine univariate polynomials being positive over
compact intervals. More details about these benchmarks are given
in~\cite[Section~6]{Chevillard11} and~\cite[Section~5]{univsos}. In
this case, we use $\putinarsos$. The main difference is that we use
SDP in $\multivsos$ instead of complex root isolation in
$\univsostwo$. The results emphasize that $\univsostwo$ performs
better and provides more concise SOS certificates, especially for high
degrees (see e.g.~\# 5). For \# 3, we were not able to obtain a
decomposition within a day of computation, as meant by the symbol $-$
in the corresponding column entries. Large values of $d$ and $\tau$
require more precision. The values of $\varepsilon$, $\delta$ and
$\delta_c$ are respectively between $2^{-80}$ and $2^{-240}$, 30 and
100, 200 and 2000.

Next, we compare the performance of $\multivsos$ with other tools in
Table~\ref{table:bench2}. The two first benchmarks are built from the
polynomial
$f = (X_1^2+1)^2+(X_2^2+1)^2 + 2 (X_1+X_2+1)^2 - 268849736/10^8$
from~\cite[Example~1]{Las01sos}, with $f_{12} := f^3$ and
$f_{20} := f^5$. For these two benchmarks, we apply $\intsos$. We use
$\polyasos$ to handle $M_{20}$ (resp.~$M_{100}$), obtained as in
Example~\ref{ex:polya} by adding $2^{-20}$ (resp.~$2^{-100}$) to the
positive coefficients of the Motzkin polynomial and $r_{i}$, which is
a randomly generated positive definite quartic with $i$ variables. We
implemented in Maple the projection and rounding algorithm
from~\cite{PaPe08} also relying on SDP, denoted by~$\PP$. For
$\multivsos$, the values of $\varepsilon$, $\delta$ and $\delta_c$ lie
between $2^{-100}$ and $2^{-10}$, 60 and 200, 10 and 60.
We compare with~$\raglib$ based on critical points and
the~\texttt{SamplePoints} procedure (abbreviated as~$\cad$) based on
CAD, both available in Maple. While these methods outperform the two
SDP-based algorithms for examples with $n \leq 3$, they are less
efficient for larger examples such as $r_6^2$ and suffer from a severe
computational burden when $n \geq 8$. An additional drawback is that
they do not provide non-negativity certificates. However, note that
they can solve less restrictive problems, involving positive
semidefinite forms or non-negative polynomials. 

{As shown in~\cite{KLYZ08}, SDP-based methods may provide exact
  certificates even in such cases and can be extended to rational
  functions. The algorithms we developed in this paper are unable to
  handle such cases. }
%
In most cases, $\multivsos$ is more efficient than $\PP$ and outputs
more concise representations. The reason is that $\multivsos$ performs
approximate Cholesky's decompositions while $\PP$ computes exact
$L D L^T$ decompositions of Gram matrices obtained after the two steps
of rounding and projection.  
Note that we could not solve the examples of
Table~\ref{table:bench2} with less precision.
%
%
%

{\begin{table}[!t]
\begin{center}
\caption{\footnotesize $\multivsos$ vs $\univsostwo$ \cite{univsos} for benchmarks from~\cite{Chevillard11}.}
\begin{tabular}{lcr|rr|rr}
\hline
\multirow{2}{*}{Id} & \multirow{2}{*}{$d$} & \multirow{2}{*}{$\tau$ (bits)} & \multicolumn{2}{c|}{$\multivsos$} & \multicolumn{2}{c}{$\univsostwo$} \\
 & & & $\tau_1$ (bits) & $t_1$ (s)  & $\tau_2$ (bits) & $t_2$ (s)  \\
\hline  
\# 1 & 13 & 22 682 & 387 178 & 0.84 & 51 992 & 0.83  \\
\# 3 & 32 & 269 958 & $-$ & $-$ & 580 335 & 2.64  \\
\# 4 & 22 & 47 019 &  1 229 036 & 2.08  & 106 797 & 1.78 \\
\# 5 & 34 & 117 307 & 10 271 899 & 69.3  &   265 330 & 5.21 \\
\# 6 & 17 & 26 438 & 713 865 & 1.15  & 59 926 & 1.03 \\
\# 7 & 43 & 67 399 & 10 360 440 &  16.3  & 152 277 & 11.2  \\
\# 8 & 22 & 27 581 & 1 123 152 & 1.95 & 63 630 & 1.86 \\
\# 9 & 20 & 30 414 & 896 342 & 1.54  & 68 664 & 1.61 \\
\# 10 & 25 & 42 749 & 2 436 703 & 3.02  &  98 926 & 2.76  \\
\hline
\end{tabular}
\label{table:bench1}
\end{center}
\end{table}} 
{\begin{table}[!ht]
\begin{center}
\caption{$\multivsos$ vs $\PP$ \cite{PaPe08} vs $\raglib$ vs $\cad$ for $n$-variate polynomials of degree $d$ (Polya).
}
\begin{tabular}{lrr|rr|rr|c|c}
\hline
\multirow{2}{*}{Id} & \multirow{2}{*}{$n$} & \multirow{2}{*}{$d$}  & \multicolumn{2}{c|}{$\multivsos$} & \multicolumn{2}{c|}{$\PP$} & $\raglib$ & $\cad$ \\
 & & & $\tau_1$ (bits) & $t_1$ (s)  & $\tau_2$ (bits) & $t_2$ (s) & $t_3$ (s) & $t_4$ (s) \\
\hline  
$f_{12}$ & 2 & 12 & 162 861 & 5.96 & 5 185 020 & 6.92 & 0.15 & 0.07 \\
$f_{20}$ & 2 & 20 & 745 419 & 110. & 78 949 497 & 141. & 0.16 & 0.03 \\
$M_{20}$ & 3 & 8 & 4 695 & 0.18 & 3 996 & 0.15 & 0.13 & 0.05 \\
$M_{100}$ & 3 & 8 & 17 232 & 0.35 & 18 831 & 0.29 & 0.15 & 0.03 \\
$r_2$ & 2 & 4 & 1 866 & 0.03 & 1 031 & 0.04 & 0.09 & 0.01 \\
$r_4$ & 4 & 4 & 14 571 & 0.15 &  47 133 & 0.25 & 0.32 & $-$ \\
$r_6$ & 6 & 4 & 56 890 & 0.34 & 475 359 & 0.54 & 623. & $-$ \\
$r_8$ & 8 & 4 & 157 583 & 0.96 & 2 251 511 & 1.41 & $-$ & $-$ \\
$r_{10}$ & 10 & 4 & 344 347 & 2.45 & 8 374 082 & 4.59 & $-$ & $-$ \\
$r_6^2$ & 6 & 8 & 1 283 982 & 13.8 & 146 103 466 & 106. & 10.9 & $-$ \\
\hline
\end{tabular}
\label{table:bench2}
\end{center}
\end{table}
} 
{\begin{table}[!ht]
\begin{center}
\caption{$\multivsos$ vs $\raglib$ vs $\cad$ for positive polynomials over basic compact semialgebraic sets (Putinar).}
\begin{tabular}{lrr|rrr|c|c}
\hline
\multirow{2}{*}{Id} & \multirow{2}{*}{$n$} & \multirow{2}{*}{$d$}  & \multicolumn{3}{c|}{$\multivsos$} & $\raglib$ & $\cad$ \\
 & & & $k$ & $\tau_1$ (bits) & $t_1$ (s) & $t_2$ (s) & $t_3$ (s) \\
\hline  
$p_{46}$ & 2 & 4 & 3 & 21 723 & 0.83 & 0.15 & 0.81 \\
$f_{260}$ & 6 & 3 & 2 & 114 642 & 2.72 & 0.12 & $-$ \\
$f_{491}$ & 6 & 3 & 2 & 108 359& 9.65 & 0.01 & 0.05 \\
$f_{752}$ & 6 & 2 & 2 & 10 204 & 0.26 & 0.07 & $-$ \\
$f_{859}$ & 6 & 7 & 4 & 6 355 724 & 303. & 5896. & $-$ \\
$f_{863}$ & 4 & 2 & 1 & 5 492 & 0.14 & 0.01 & 0.01 \\
$f_{884}$ & 4 & 4 & 3 & 300 784 & 25.1 & 0.21 & $-$ \\
$f_{890}$ & 4 & 4 & 2 & 60 787 & 0.59 & 0.08 & $-$ \\
butcher & 6 & 3 & 2 & 247 623 & 1.32 & 47.2 & $-$ \\
heart & 8 & 4 & 2 &  618 847 & 2.94 & 0.54 & $-$ \\
magnetism & 7 & 2 & 1 & 9 622 & 0.29 & 434. & $-$ \\
\hline
\end{tabular}
\label{table:bench3}
\end{center}
\end{table}
}

Finally, we compare the performance of $\multivsos$ ($\putinarsos$) on
positive polynomials on basic compact semi-algebraic sets in
Table~\ref{table:bench3}. 
The first benchmark is
from~\cite[Problem~4.6]{Las01sos}. Each benchmark $f_i$ comes from an
inequality of the Flyspeck project~\cite{Hales_theflyspeck}. The three
last benchmarks are from~\cite{Munoz13}. The maximal degree of the
polynomials involved in each system is denoted by $d$. We emphasize
that the degree $D = 2k$ of each Putinar representation obtained in
practice with $\putinarsos$ is very close to $d$, which is in contrast
with the theoretical complexity estimates obtained in
Section~\ref{sec:putinar}. The values of $\varepsilon$, $\delta$ and
$\delta_c$ lie between $2^{-30}$ and $2^{-10}$, 60 and 200, 10 and 30.
As for Table~\ref{table:bench2}, $\raglib$ performs better for problems with $d \leq 3$ and $n \leq 4$. Larger problems (e.g.~magnetism, $f_{859}$) are handled more efficiently with $\multivsos$ and $\cad$ can only solve 3 benchmarks out of 10.
We plan to extend the procedure $\PP$ and the algorithm from~\cite{KLYZ08} to the case of such constrained problems.


\end{document}